\documentclass[aps,prl,notitlepage,twocolumn,superscriptaddress,longbibliography,nofootinbib]{revtex4-1}
\pdfoutput=1
\usepackage[utf8]{inputenc}

\usepackage{amsmath,amsthm,amssymb,amsfonts}
\usepackage{scalerel}
\usepackage{mathtools}
\usepackage{graphicx}
\usepackage{stfloats}
\usepackage{subfigure}
\usepackage[normalem]{ulem}
\usepackage[colorlinks = true,
            linkcolor = blue,
            urlcolor  = magenta,
            citecolor = blue,
            anchorcolor = blue]{hyperref}
\usepackage{mathrsfs}
\usepackage{bbold}
\usepackage{units}
\allowdisplaybreaks
\usepackage{bm}
\usepackage{braket}
\usepackage{makecell}
\usepackage[nameinlink]{cleveref} 
\usepackage{upgreek}
\usepackage{blindtext}
\usepackage{verbatim}
\usepackage{algorithm}
\usepackage[noend]{algpseudocode}
\usepackage[dvipsnames]{xcolor}
\usepackage{bbm}
\usepackage{array}
\usepackage{multirow}
\usepackage{tabularx}
\usepackage{float}
\usepackage{dcolumn}
\usepackage{slashed}
\usepackage{braket}
\usepackage{verbatim}
\usepackage{multirow}
\usepackage{resizegather}
\usepackage{titlesec}
\usepackage[toc,page]{appendix}
\usepackage[export]{adjustbox}

\newcommand{\tr}{\mathrm{tr}}
\theoremstyle{definition}
\theoremstyle{plain}
\newtheorem{prop}{Property}
\theoremstyle{plain}
\newtheorem*{prop*}{Property}
\newtheorem{claim}{Claim}
\newtheorem{lemma}{Lemma}

\begin{document}

\title{{Spread of Entanglement in Generalized Kicked Ising Chain\\
}}
\author{Tanay Pathak\,\,\href{https://orcid.org/0000-0003-0419-2583}
{\includegraphics[scale=0.05]{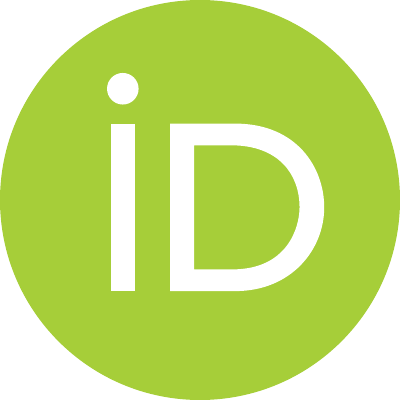}}}
\email{pathak.tanay.4s@kyoto-u.ac.jp}
\affiliation{Department of Physics, Kyoto University, Kitashirakawa Oiwakecho, Sakyo-ku, Kyoto 606-8502, Japan}

\author{Hiromi Ebisu\,\,\href{https://orcid.org/0000-0001-8856-8562}
{\includegraphics[scale=0.05]{orcidid.pdf}}}
\email{hiromi.phys@gmail.com}
\affiliation{
Interdisciplinary Theoretical and Mathematical Sciences Program (iTHEMS), RIKEN,
Wako 351-0198, Japan}

\author{Toma\v{z} Prosen}
\email{tomaz.prosen@fmf.uni-lj.si}
\affiliation{
Department of Physics, Faculty of Mathematics and Physics,
University of Ljubljana, Jadranska 19, SI-1000 Ljubljana, Slovenia}

\begin{abstract}
We investigate the dynamics of entanglement in a generalized version of the kicked Ising chain, extending the model from the standard qubit case (local dimension $q=2$) to higher local dimensions ($q > 2$). We identify the existence of ``dual-unitary'' points where the model's space-time duality allows for exact analytical solutions. Our analysis reveals that while a few unique dual-unitary points exist analytically for systems with local dimensions $q=3$ and $q=4$, such points do not exist for $q \ge 5$ due to the lack of a unique kicking strength that satisfies the required matrix element conditions. Utilizing the transfer matrix method and a replica trick specifically adapted for higher dimensions, we derive exact expressions for the growth of entanglement entropy in the $q=3$ (kicked Potts-type) model starting from a class of solvable initial states. Our results demonstrate that at the dual-unitary point, both R\'enyi and von Neumann entanglement entropies grow linearly with time until reaching a maximum value determined by the subsystem size.

\end{abstract}
\maketitle
\begingroup
\renewcommand{\thefootnote}{}
\footnotetext{The authors are listed in order of their relative contributions.}
\addtocounter{footnote}{-1}
\endgroup

\section{Introduction}
The study of entanglement dynamics in quantum many-body systems has become a central theme in modern theoretical physics, particularly for understanding the thermalization of isolated quantum systems and the characterization of quantum chaos. Among the various models used to explore these phenomena, the kicked Ising model (KIM) \cite{prosenkfim,PhysRevE.65.036208,Prosen:2007hwp} has served as a paradigmatic example, which is known to possess a ``dual-unitary point'', a particle time symmetry in the model, at a special point \cite{Akila_2016} (also see \cite{Bertini:2025ddr} for a comprehensive review). This duality property can then be generalized to dual-unitary circuits allowing for the analytical calculation of various measures of quantum chaos such as the correlation functions \cite{PhysRevLett.123.210601,PhysRevB.102.174307,PhysRevLett.126.100603,PhysRevB.101.094304,PhysRevLett.133.170403}, the growth of entanglement \cite{Bertini:2018fbz,PhysRevLett.132.120402,Gopalakrishnan:2019pip,Bertini:2019gbu,Bertini:2019wkb,PhysRevLett.125.070501,Reid:2021fsg,Zhou:2022uuw,PhysRevB.107.174311,PRXQuantum.6.010324,Pathak:2026rfk,Pathak:2026jfn}, the spectral form factor \cite{Bertini:2018wlu,Bertini:2020mdd,PhysRevResearch.2.043403,PhysRevX.11.021051,PhysRevResearch.6.033226,PhysRevB.111.094316}.

In this work, we 
extend the study of the kicked Ising model to systems with higher local dimensions, specifically where $q > 2$. This generalization allows us to investigate how the properties of entanglement spread and the existence of dual-unitarity evolve as the Hilbert space of the individual constituents becomes more complex. In particular, we focus on the generalized kicked Ising chain, which for $q=3$ corresponds to a kicked Potts-type model (KPM)\cite{PhysRevB.105.144306,Claeys:2024tuy}. A key objective of this work is to determine the conditions under which these generalized models exhibit dual-unitarity. While dual-unitarity is a robust feature for $q=2$, our analysis shows that for higher dimensions, such a point exists analytically only for $q \le 4$. 
To wit, for $q \ge 5$, we find that there is no unique value of the kick strength $b$ that satisfies the requirements for dual-unitarity across all matrix elements.\par
Beyond establishing the existence and breakdown of dual-unitary points, we further investigate the resulting entanglement dynamics in the solvable $q=3$ regime. We identify a class of analytically tractable initial states for which the entanglement evolution can be computed exactly using a transfer-matrix approach and replica techniques adapted to higher local dimensions. At the dual-unitary point, we show that both Rényi and von Neumann entropies exhibit ballistic growth with maximal entanglement velocity until saturation. We further complement these exact results with numerical studies of generic initial states, revealing that signatures of near-maximal entanglement spreading persist even beyond the exactly solvable setting.\par
The contribution of this work is twofold. First, it establishes the limits of dual-unitarity in generalized Ising models, demonstrating that while solvability persists for $q=3$ and $q=4$, it encounters fundamental constraints as the local dimension increases further. Second, by deriving exact analytical expressions for the entanglement growth in these chains with larger local dimensions provides a new benchmark for understanding how quantum information spreads in non-qubit systems. These results are of further importance to the current active area of quantum computing architecture. It is known from previous results \cite{PhysRevA.101.022304} that to maximize the improvement of quantum computer, it is important to use qudits with a certain number of internal states depending on the spatial
topology and connectivity of a quantum system. The Potts models furthermore is special because it admits a description using parafermions, type of particles which obey non-trivial quasi-local
anyonic statistics, which are also linked to topological quantum
computing \cite{PhysRevX.5.041040,Alicea:2015hja,PhysRevB.93.125105,trebst2008short}. 

The rest of this paper is organized as follows. In Sec.~\ref{sec2}, we introduce the generalized KIM. In Sec.~\ref{sec:dualexist}, we investigate the condition to obtain the dual-unitarity point for the model. In Sec.~\ref{sec4} we setup the problem. In Sec.~\ref{sec5} we state various properties of the transfer matrix involved. In Sec.~\ref{sec6} we provide various numerical results for solvable states, generic states, integrable and weak integrability breaking systems at finite systems size and results in the thermodynamic  limit using the duality mapping. 
\section{Model}\label{sec2}
We first recall the kicked KFIM model for which the Hamiltonian is given by 

\begin{equation}\label{eq:hamxyz}
    H= H_{I} + H_{K} \sum_{\tau= -\infty}^{\infty} \delta(t- \tau)
\end{equation}
where 
\begin{align}
    H_{I}& = \sum_{i=1}^{N} J \sigma^{z}_{i}\sigma^{z}_{i+1} + \sum_{i}h_{i}\sigma_{i}^{z}, \quad
    H_{K}& = \sum_{i=1}^{N} b\, \sigma^{x}_{i}.
\end{align}
The total Floquet operator of the system is 
\begin{equation}
    U_{KI}= U_{K}U_{I},
\end{equation}
where $U_{I}= e^{-i H_{I}}$ and $U_{K}= e^{-i H_{K}}$. Also, $h_{i}$ can be generic. 

Here we generalize the Ising type model, which has local dimension $q=2$, to models, which have local dimension $q > 2$. For simplicity, we first focus on the case $q=3$. In this case, the model corresponds to kicked Potts model~\cite{PhysRevB.105.144306}. Hamiltonian is given by

\begin{align}\label{eqn:ising3}
 H_{I}& = J\sum_{i=1}^{N} {Z}_{i}{Z}^{\dagger}_{i+1}+ \sum_{i}h_{i}Z_{i}+h.c.\nonumber \\
    H_{K}& = b\sum_{i=1}^{N} \, X_{i}+h.c.,
\end{align}
where $h_{i}$ 
denotes
generic Gaussian random numbers. Also, ``h.c." stands for Hermitian conjugate. 
Defining $\omega= e^{2 \pi i/3}$, we can write the explicit representation of $Z$ and $X$ as 
\begin{equation}
Z=\left(\begin{array}{ccc}
1 & 0 & 0 \\
0 & \omega & 0 \\
0 & 0 & \omega^2
\end{array}\right), \quad X=\left(\begin{array}{ccc}
0 & 1 & 0 \\
0 & 0 & 1 \\
1 & 0 & 0
\end{array}\right) .
\end{equation}
with the property that $X^{3}= Z^{3}= 1$.
The total Floquet operator in this case is described by
\begin{equation}
    U_{KI_3}= U_{K} U_{I}.
\end{equation}
Generalization to other values of $q$ is straightforward. Yet,  as we show in the next section, a dual-unitary point exists only for $q \leq 4$.

\section{Existence of dual-unitary point for general $q$}\label{sec:dualexist}
In this section, we determine when the generalized kicked Ising chain
introduced in Sec.~II admits a dual-unitary point.
Similar to the
analysis presented in~\cite{Akila_2016}, we get the following partition function 
\begin{align}\label{eq:partxyz}
    Z(N, T) &= \mathrm{Tr}(U_{N}^{T})= 
\sum_{\left\{s_{\tau}= \pm 1\right\}} \prod_{\tau=1}^{t} \braket{s_{\tau+1}|e^{-i H_{k}}e^{-i H_{I}}|s_{\tau}}.
\end{align}
Here $\{\mathbf{s}_{1},\mathbf{s}_{2}, \cdots, \mathbf{s}_{t}\} \equiv \{s_{\tau,j}\}$ which is product of diagonal basis of spin
variable such that $\sigma_{j}^{z}\ket{\mathbf{s}}= s_{j}\ket{\mathbf{s}}$. The action of the kicked part, $U_{K}$, is easily determined. We can simply use the result given in \cite{Akila_2016} with $\varphi= 0$ to obtain the relevant contribution. First, we note that full $U_{K}$ factorizes and it suffices to find the relevant contribution of the subpart, kick at $i-$th site, $U_{K}^{i}= e^{-i b \sigma_{i}^{x}}$. The relevant contribution to the matrix elements are 
evaluated as 
\begin{align}
    \braket{\uparrow|U_{K}^{i}|\uparrow}&= \cos(b)= e^{-iK}e^{\eta} \nonumber \\
    \braket{\downarrow|U_{K}^{i}|\downarrow}&= \cos(b)= e^{-iK}e^{\eta}\nonumber \\
    \braket{\downarrow|U_{K}^{i}|\uparrow}&=\braket{\uparrow|U_{K}^{i}|\downarrow} = -i\sin(b)
\end{align}
where the constant $K$ and $\eta$ are defined as  
\begin{equation}
    e^{-4 i K}=1-\frac{1}{x^2}, \quad e^{4 \eta}=x^2\left(x^2-1\right)
\end{equation}
where $x= \sin b$.  
The total partition is rewritten as 
\begin{widetext}
   \begin{align}\label{z}
 Z(N, T)    = \sum_{\left\{s_{n, \tau}= \pm 1\right\}} \exp \left(-i \sum_{n=1}^N \sum_{\tau=1}^T J s_{n, t} s_{n+1, \tau}+K s_{n, \tau} s_{n, \tau+1}+h s_{n, \tau}+i \eta + J_{x} \right) 
\end{align} 
\end{widetext}

The partition function~\eqref{z} respects symmetry under the exchanging
$n \leftrightarrow \tau $ and also $J \leftrightarrow K$ 
The property of duality between particle and time direction is 
 referred to as the
\emph{dual unitarity}.

We now 
generalize the Ising type model, which has local dimension $q=2$, to models, which has local dimension $q > 2$. 
To achieve this, we study the following Hamiltonian:
\begin{align}\label{eqn:ising3}
    H_{I}& = \sum_{i=1}^{N} J ({Z}_{i}{Z}^{\dagger}_{i+1} +{Z}^{\dagger}_{i}{Z}_{i+1})+ \sum_{i}h_{i}(Z_{i}+ Z^{\dagger}_{i})\nonumber \\
    H_{K}& = \sum_{i=1}^{N} b\, (X_{i}+X^{\dagger}_{i}).
\end{align}
Here, $h_{i}$ can be taken to be generic Gaussian random numbers. Denoting $\omega= e^{2 \pi i/q}$ we can write the explicit representation of $Z$ and $X$ as 
\begin{equation}\label{eqn:zqxq}
Z=\left(\begin{array}{cccc}
1 & 0 & \cdots & 0 \\
0 & \omega & \cdots &0  \\
\vdots & \vdots & \vdots & \vdots  \\
0 & 0 & \cdots&\omega^{q-1}
\end{array}\right), \quad X=\left(\begin{array}{ccccc}
0 & 1 & 0 & \cdots&0 \\
0 & 0 & 1&\cdots &0\\
\vdots & \vdots & \vdots & \vdots &\cdots \\
1 & 0 & 0 &\cdots &0 
\end{array}\right) .
\end{equation}
with $X^{q} = Z^{q}= I$.

The total Floquet operator in this case is
\begin{equation}\label{kick}
    U_{KI3}= U_{HI}U_{K}
\end{equation}
In the following, we evaluate matrix elemets of this kicked model~\eqref{kick}. To this end, 
we 
introduce the Fourier basis as
\begin{equation}
    \ket{f}= \frac{1}{\sqrt{q}} \sum_{n=0}^{q-1} \omega^{n f} \ket{n}.
\end{equation}
One can verify that
\begin{align}
    X\ket{f}= \omega^{k}\ket{f} \\
  X^{\dagger}\ket{f}= \omega^{-k}\ket{f},    
\end{align}
from which we have
\begin{align}
 (X+X^{\dagger})\ket{f}= (\omega^{f}+\omega^{-f}) \ket{f}= 2 \cos\left(\frac{2\pi f}{q}\right)\ket{f} \nonumber \\ 
 \implies \exp[-ib (X+X^{\dagger})]\ket{f}= \exp\left(-2ib\cos\left(\frac{2\pi f}{q}\right)\right)\ket{f}.
\end{align}
Taking
this fact into consideration, we obtain
\begin{align}
    &\braket{m| \exp[-ib (X+X^{\dagger})] | n} \nonumber \\
    &= \sum_{k,k'} \braket{m|f} \exp\left[-2ib\cos\left(\frac{2\pi f}{q}\right)\right]\,\, \braket{f|n} \\
    &= \frac{1}{q} \sum_{f=0}^{q-1}\omega^{f(m-n)}e^{-2 ib\cos\left(\frac{2\pi f}{q}\right)}.
\end{align}
We note that the off diagonal matrix elements are not equal for general $q$. For $q=3$, we explicitly write the matirx:
\begin{widetext}
\begin{equation}
    \exp[-ib (X+X^{\dagger})]= \left(
\begin{array}{ccc}
 \frac{1}{3} e^{-2 i b} \left(1+2 e^{3 i b}\right) & -\frac{1}{3} e^{-2 i b} \left(-1+e^{3 i b}\right) & -\frac{1}{3} e^{-2 i b} \left(-1+e^{3 i b}\right)  \\
 -\frac{1}{3} e^{-2 i b} \left(-1+e^{3 i b}\right) & \frac{1}{3} e^{-2 i b} \left(1+2 e^{3 i b}\right) & -\frac{1}{3} e^{-2 i b} \left(-1+e^{3 i b}\right) \\
 -\frac{1}{3} e^{-2 i b} \left(-1+e^{3 i b}\right) & -\frac{1}{3} e^{-2 i b} \left(-1+e^{3 i b}\right) & \frac{1}{3} e^{-2 i b} \left(1+2 e^{3 i b}\right) \\
\end{array}
\right),
\end{equation}
\end{widetext}
from which we see that the off-diagonal matrix elements are identical.
Likewise, for $q=4$ explicit form of the matrix is given by
\begin{widetext}
    \begin{equation}
    \exp[-ib (X+X^{\dagger})]= \left(
\begin{array}{cccc}
 \frac{1}{4} e^{-2 i b} \left(1+e^{2 i b}\right)^2 & -\frac{1}{4} e^{-2 i b} \left(-1+e^{4 i b}\right) & \frac{1}{4} e^{-2 i b} \left(-1+e^{2 i b}\right)^2 & -\frac{1}{4} e^{-2 i b} \left(-1+e^{4 i b}\right) \\
 -\frac{1}{4} e^{-2 i b} \left(-1+e^{4 i b}\right) & \frac{1}{4} e^{-2 i b} \left(1+e^{2 i b}\right)^2 & -\frac{1}{4} e^{-2 i b} \left(-1+e^{4 i b}\right) & \frac{1}{4} e^{-2 i b} \left(-1+e^{2 i b}\right)^2 \\
 \frac{1}{4} e^{-2 i b} \left(-1+e^{2 i b}\right)^2 & -\frac{1}{4} e^{-2 i b} \left(-1+e^{4 i b}\right) & \frac{1}{4} e^{-2 i b} \left(1+e^{2 i b}\right)^2 & -\frac{1}{4} e^{-2 i b} \left(-1+e^{4 i b}\right) \\
 -\frac{1}{4} e^{-2 i b} \left(-1+e^{4 i b}\right) & \frac{1}{4} e^{-2 i b} \left(-1+e^{2 i b}\right)^2 & -\frac{1}{4} e^{-2 i b} \left(-1+e^{4 i b}\right) & \frac{1}{4} e^{-2 i b} \left(1+e^{2 i b}\right)^2 \\
\end{array}
\right)
\end{equation}
\end{widetext}

In general, we can write following for $j$-th kick

\begin{align} \label{eqn:xqkick}
    \braket{m|e^{-i b (X_{j} + X_{j}^{\dagger})}|m} &=  \frac{1}{q} \sum_{k=0}^{q-1}e^{-2 ib\cos\left(\frac{2\pi k}{q}\right)}\equiv A \nonumber \\
     \braket{m|e^{-i b (X_{j} + X_{j}^{\dagger})}|n} &=\frac{1}{q} \sum_{k=0}^{q-1}\omega^{k(m-n)}e^{-2 ib\cos\left(\frac{2\pi k}{q}\right)} \nonumber\\
     &\equiv B(m-n) \, , m \neq n.
\end{align}
Note that $B(m-n)$ is independent on $m-n$, allowing us to write
$B(m-n) \equiv B$. 
To proceed, we 
define
\begin{equation}\label{eqn:alphabeta}
    \alpha= -i \log(B)\quad \beta= -i (\log(A) - \log(B))
\end{equation}
This allows us to write the contribution as a single exponential
\begin{align}
    \braket{m|e^{-i b (X_{j} + X_{j}^{\dagger})}|n} &=  e^{-i (\alpha + \beta \delta_{mn})}.
\end{align}
we can check if $m = n$ we get $A$ and $B$ otherwise. 
Since
\begin{align}
    \delta_{mn} &= \frac{1}{q} \sum_{k=0}^{q-1}(s_{m}s^{*}_{n})^{k}, \quad (s_i=\omega^i),
\end{align}
we can write 
the contribution of the kick as 
\begin{align}
    \braket{m|e^{-i b (X_{j} + X_{j}^{\dagger})}|n} &=  e^{-i (\alpha + \beta \frac{1}{q}\sum_{k=0}^{q-1}(s_{m}s^{*}_{n})^{k})} \nonumber \\
  \braket{m|e^{-i b (X_{j} + X_{j}^{\dagger})}|n} &=  e^{-i (k_{0} + k_{1}( s_{m}s_{n}^{*} + s^{*}_{m}s_{n}))} 
\end{align}
with $k_{0}= \alpha+ \frac{\beta}{q} \sum_{k=0, k\neq 1,q-1}^{q-2}( s_{m}s_{n}^{*})^{k} $ and $k_1= \beta/q$.
The total contribution due to the kick 
is described by
   \begin{align}
    \braket{s_{\tau}|U_{K}|s_{\tau+1}} &=  \exp\left[\sum_{j=1}^{N}\sum_{\tau=1}^{T}-i (k_{0} + k_{1}(2 \Re(s_{j,\tau}s_{j,\tau+1}) ) \right] 
\end{align} 

For $U_{I}$ part, we obtain 
\begin{align}
    \braket{s_{\tau}|U_{I}|s_{\tau+1}} &= \nonumber \\ &\exp\left[\sum_{j=1}^{N}\sum_{\tau=1}^{T}-i ( J (2 \Re(s_{j,\tau}s_{j+1,\tau}) + h_{j} \Re(s_{i,\tau})) \right] 
\end{align} 

where we explicitly  have $k_{1}$ as
\begin{equation}
    k_{1}= - \frac{i}{q} \log\left(\frac{A}{B}\right)= - \frac{i}{q} \log\left(\frac{ \sum_{k=0}^{q-1}e^{-2 ib\cos\left(\frac{2\pi k}{q}\right)}}{ \sum_{k=0}^{q-1}\omega^{k(m-n)}e^{-2 ib\cos\left(\frac{2\pi k}{q}\right)}}\right) 
\end{equation}
The denominator represents the off diagonal elements, $(U_{K})_{mn}$, of matrix $U_{K}$. For example if we take $q=3$ then $m-n$ can be $1,2$ but it can be easily verified that the value of $(U_{K})_{mn}$ is same in both the cases. For $q \geq 4$, however, 
%
this is not the case and off diagonal elements are not same in general. 

 Now demanding that $k_{1}$ is real will give us the dual-unitary point. To obtain that, we need to find specific value of $b$ such that following holds
 \begin{equation}\label{eq:bsolve}
     \left|\sum_{k=0}^{q-1}e^{-2 ib\cos\left(\frac{2\pi k}{q}\right)}\right| = \left|\sum_{k=0}^{q-1}\omega^{k(m-n)}e^{-2 ib\cos\left(\frac{2\pi k}{q}\right)}  \right| \quad \forall \, m-n \in \{1, \cdots,q-1\}
 \end{equation}

 Solving the above equation would give us $b$ that will correspond to dual-unitary point.

 \begin{table}[H]
     \centering
     \begin{tabular}{|c|c|}
     \hline
         $q$ & $b$ \\\hline
         3 & $\frac{2 \pi }{9}$, $\frac{4 \pi }{9}$ \\\hline
         4 & $\frac{ \pi }{4}$, $\frac{ 3\pi }{4}$ \\\hline
         5& 0.717292 (1), 0.910766 (2), $\cdots$ \\\hline
          6& 0.716109 (1), 0.935929 (2), $\cdots$ \\\hline
         10 & 0.717348 (1),0.920599 (2), $\cdots$ \\\hline
         100 & 0.717348 (1), 0.920592 (2), $\cdots$ \\\hline
         1000 & 0.717348 (1), 0.920592 (2),$\cdots$\\\hline
     \end{tabular}
     \caption{Value of $b$ at which we might have dual unitarity for some values of $q$. Value in the bracket for $q=5,6$ corresponds to the value of $(m-n)$. Till $q=4$ there is a value of $b$ such that Eq. \eqref{eq:bsolve} is true while we do not find that this is the case for larger $q \geq 5$. }
     \label{tab:qdualpoint}
 \end{table}

 For $q \leq 4$ it is possible to obtain the result analytically. However, for large $q$ an asymptotic analysis is possible. First, consider the sum 
 \begin{equation}
     \mathcal{S}= \sum_{k=0}^{q-1}e^{\frac{2 \pi i kr}{q}}e^{-2 ib\cos\left(\frac{2\pi k}{q}\right)} , r \in\{0,1,\cdots\}.
 \end{equation}
For large $q$, we replace the summation by an integral
\begin{equation}
    \mathcal{S} \xrightarrow[]{q\rightarrow\infty} \frac{q}{2\pi} \int^{2\pi}_{0} e^{ir\theta} e^{-2 i b \cos(\theta)} \mathrm{d\theta}= q i^{-r} J_{r}(2b)
\end{equation}
where $J_{n}(x)$ is the Bessel's function. Thus, for large $q$ we can write Eq. \eqref{eq:bsolve} as 
\begin{equation}\label{eq:bvalue}
    |J_{0}(2b)|= |J_{m-n}(2b)|
\end{equation}
It is then easy to see that it reproduces the values for large $q$ shown in the Table \eqref{tab:qdualpoint}. However, the table suggests that there is \emph{no unique value} of $b$ for $q\geq4$ such that Eq. \eqref{eq:bsolve} holds, for all $m-n \in \{1, \cdots,q-1\}$ . We can show it by assuming that there exist $c$ such that $|J_{0}(2b)|= |J_{m-n}(2b)|= c$ holds. Then we note the following property of Bessel's function 
\begin{equation}
    J_0^2(z)+2 \sum_{k=1}^{\infty} J_k^2(z)=1
\end{equation}
Substituting $|J_{0}(2b)|= |J_{m-n}(2b)|= c$ we have
\begin{equation}
    c^{2} + 2 \sum_{r=1}^{\infty}c^{2}=1
\end{equation}
However, we have 
\begin{equation}
    \sum_{r=1}^{\infty}c^{2}= \begin{cases}
        \infty \quad c \neq 0\\
        0 \quad c = 0
    \end{cases}
\end{equation}
which gives a contradiction. Thus there is no single value of $ b$ (substituting $z=2b$) such that Eq. \eqref{eq:bvalue} holds. 
This demonstrates the absence of dual-unitary points for
 for $q>4$. 
A way is to understand it is as follows.

The dual-unitary point is such that, $J= \frac{2 \pi}{3},\frac{4 \pi}{3}$, and the modulus of matrix elements of $U_{K}$, i.e. $(U_{K})_{mn}$ is the same which also translates to the fact that a unique value of $b$ exists such that Eq. \eqref{eq:bsolve} holds. 
However, using the argument above and Table \eqref{tab:qdualpoint}, we see that for $q \geq 5$ there is no unique value of $b$ such that all $(U_{K})_{mn}$ have same modulus. So the dual-unitary point does not exist for $q \geq 5$. Explicit treatment for the case of $q=3$ can be found in Appendix~\ref{append:ptduality}.

In summary, a physical way to understand the absence of dual-unitary points for $q\geq 5$ is the following.
Dual unitarity requires the Floquet gate to behave, in a suitable basis, like a perfectly scrambling object whose matrix elements all have identical modulus. In the present model, this translates into the requirement that all transition amplitudes generated by the kick operator $U_K$ have equal magnitude.
For $q=2,3,4$, the discrete Fourier structure of the generalized Pauli operators is sufficiently constrained that a single value of the kick strength b can equalize all matrix elements simultaneously. In other words, destructive and constructive interference between different clock states can still be tuned uniformly.
For larger local Hilbert space dimensions, however, the number of independent off-diagonal amplitudes increases. The different Fourier components then oscillate with inequivalent phases, and a single parameter $b$ is no longer sufficient to enforce equal modulus for all matrix elements simultaneously. From this perspective, the breakdown of dual unitarity for $q\geq 5$ originates from an over constrained interference condition in the enlarged local Hilbert space.
The asymptotic Bessel-function analysis makes this obstruction explicit: different hopping sectors correspond to different Bessel functions $J_r(2b)$, whose magnitudes cannot all coincide at a single value of $b$.
\section{Solvable states and the setup}\label{sec4}
We consider a general 3-level state
\begin{widetext}
   \begin{align} \label{eqn:gstate}
    \ket{\psi_{\theta, \chi, \phi_{1},\phi_{2}}}&= \bigotimes_{i=1}^{L}\Bigg(\cos \left(\frac{\theta_{i}}{2}\right)\ket{0}+e^{i \alpha_{i}}\sin \left(\frac{\theta_{i}}{2}\right) \cos \left(\frac{\chi_{i}}{2}\right) \ket{1} +e^{i \beta_{i}}\sin \left(\frac{\theta_{i}}{2}\right) \sin \left(\frac{\chi_{i}}{2}\right)\ket{2}\Bigg).
\end{align} 
\end{widetext}

We then define three classes of states. The first class corresponds to parameters $\theta_{i}= \Theta= 2\arccos \frac{1}{\sqrt{3}}$, $\chi_{i}= \frac{\pi}{2}$ $\forall i \in [1,L]$ and is given as follows
\begin{equation}
    \mathcal{T}_{1}= \ket{\psi_{\Theta, \pi/2, \alpha,\beta}}.
\end{equation}

 The second class corresponds parameters $\theta_{i}= 0$ $\forall i\in [1,L]$ and is given as follows 
 \begin{equation}
    \mathcal{T}_{2}= \ket{\psi_{0, \chi, \alpha,\beta}}\,\forall \chi_{i} \in [0,2\pi].
\end{equation} 

Finally the third class corresponds to the parameters $\theta_{i}= \pi, \chi \in \{0,\pi\}$ and is given as follows
 \begin{align}
    \mathcal{T}_{3}&= \ket{\psi_{\pi, \chi, \alpha,\beta}}.
\end{align} 
We can immediately see two special case of $\mathcal{T}_{3}$ with $\chi=0,\pi$ are $\ket{1}$ and $\ket{2}$ respectively. 

After $t$-kicks of the Floquet operator, the evolved state is given by 
\begin{equation}
    \ket{\psi_{\theta, \chi, \alpha,\beta}(t)}= (U_{KI3}[\textbf{h}])^{t} \ket{\psi_{\theta, \chi, \alpha,\beta}}.
\end{equation}
At this point, we note that the time evolution of $\mathcal{T}_{2}$ and $\mathcal{T}_{3}$ state are related to that of $\mathcal{T}_{1}$ state with $\chi=0,\pi.$

After one kick can have following relation for the $\mathcal{T}_{2}$ state
\begin{align}
    U_{KI_3} \ket{\psi_{0, \chi, \alpha,\beta}}& \simeq \ket{\psi_{\Theta, \frac{\pi}{2}, \frac{-2\pi}{3},\frac{-2\pi}{3}}}, 
\end{align}
where $\simeq$ implies that they are equal upto a global phase that we can safely ignore. 
Similarly for the two states belonging to $\mathcal{T}_{3}$ state we have 
\begin{align}
    U_{KI_3} \ket{\psi_{\pi,0, \alpha,\beta}} &\simeq \ket{\psi_{\Theta, \pi/2, 2\pi /3,0}},\\
      U_{KI_3} \ket{\psi_{\pi,\pi, \alpha,\beta}} &\simeq \ket{\psi_{\Theta, \pi/2, 0,2\pi /3}}.
\end{align}
These relations then imply that we have following 
    \begin{align}
    & \ket{\psi_{0, \chi, \alpha,\beta}(t)} \simeq \ket{\psi_{\Theta, \frac{\pi}{2}, -\frac{2\pi}{3},-\frac{2\pi}{3}}(t-1)},  \\ 
      & \ket{\psi_{\pi,0, \alpha,\beta}(t)} \simeq \ket{\psi_{\Theta, \frac{\pi}{2}, \frac{2\pi}{3},0}(t-1)},\\
     & \ket{\psi_{\pi,\pi, \alpha,\beta}(t)}\simeq \ket{\psi_{\Theta, \frac{\pi}{2}, 0,\frac{2\pi}{3}}(t-1)}.
\end{align}
The above relations imply that the $t-$th time evolved $\mathcal{T}_{2}$ and $\mathcal{T}_{3}$ type state is related to some $(t-1)$-th time evolved $\mathcal{T}_{1}$ state. 

The entanglement entropy is obtained using reduced density matrix. 
This matrix is given by
\begin{equation}\label{eqn:rhogen}
    \rho_{A}(t)= \mathrm{tr}_{\mathcal{H}_{L-N}}[\ket{\psi_{\theta, \chi, \phi_{1},\phi_{2}}(t)} \bra{\psi_{\theta, \chi, \phi_{1},\phi_{2}}(t)} ],
\end{equation}
where the partial trace is obtained by tracing out the $L-N$ qubits,  corresponding to Hilbert space $\mathcal{H}_{L-N}$. The entanglement content of $\rho_{A}(t)$ is quantified by the R\'enyi entropies $S^{(\alpha)}_{A}(t)$ which is given by 
\begin{equation}
    S^{(\alpha)}_{A}(t) = \frac{1}{1-\alpha} \log \tr [(\rho_{A}(t))^{\alpha}].
\end{equation}
Of interest is also the von Neumann entropy which is given by 
\begin{equation}
    S_{A}(t)= \lim_{\alpha \rightarrow 1} S^{(\alpha)}_{A}(t)= - \tr (\rho \log(\rho)).
\end{equation}

The quantity of interest for us is $\tr[(\rho_{A}(t))^{n}]$, which we can now conveniently write using Eq. \eqref{eqn:gstate}  and \eqref{eqn:rhogen} as follows 
\begin{widetext}
    \begin{align}
\tr[(\rho_{A}(t))^{n}] &=\sum_{\{\mathbf{a}_{i}, \mathbf{b}_{i}\}} \langle  \psi_{\theta, \chi, \phi_{1},\phi_{2}}| (U_{KI_{3}}[\mathbf{h}])^{-t} |\mathbf{a}_{1},\mathbf{b}_{2} \rangle \langle \mathbf{a}_{1},\mathbf{b}_{1}| (U_{KI_{3}}[\mathbf{h}])^{t})|\psi_{\theta, \chi, \phi_{1},\phi_{2}} \rangle \nonumber \\
&\times \langle  \psi_{\theta, \chi, \phi_{1},\phi_{2}}| (U_{KI_{3}}[\mathbf{h}])^{-t} |\mathbf{a}_{2},\mathbf{b}_{3} \rangle \langle \mathbf{a}_{2},\mathbf{b}_{2}| (U_{KI_{3}}[\mathbf{h}])^{t})|\psi_{\theta, \chi, \phi_{1},\phi_{2}} \rangle \nonumber\\
&~~~~~~~~~~~~~~~~~~~~~~~~~~~~~~~~~~~~~~\vdots \nonumber \\
&\times \langle  \psi_{\theta, \chi, \phi_{1},\phi_{2}}| (U_{KI_{3}}[\mathbf{h}])^{-t} |\mathbf{a}_{n},\mathbf{b}_{1} \rangle \langle \mathbf{a}_{n},\mathbf{b}_{n}| (U_{KI_{3}}[\mathbf{h}])^{t})|\psi_{\theta, \chi, \phi_{1},\phi_{2}} \rangle
\end{align}
\end{widetext}

We find the contribution above we basically need to find the following building block
\begin{equation}
    \langle \mathbf{a},\mathbf{b}| (U_{KI_{3}}[\mathbf{h}])^{t})|\psi_{\theta, \chi, \phi_{1},\phi_{2}} \rangle
\end{equation}

It is easy to evaluate it by inserting $t$ resolutions of identity 
\begin{widetext}
    \begin{align}
   \langle \mathbf{a},\mathbf{b}| (U_{KI_{3}}[\mathbf{h}])^{t})|\psi_{\theta, \chi, \phi_{1},\phi_{2}} \rangle & = \sum_{\{s_{\tau}\}} \prod_{\tau=1}^{t-1} \langle \mathbf{a},\mathbf{b}|U_{KI_{3}}[\mathbf{h}]|\textbf{s}_{\tau} \rangle \nonumber \times \langle \mathbf{a},\mathbf{b}| U_{KI_{3}} [\mathbf{h}] | \textbf{s}_{\tau}\rangle \langle \textbf{s}_{1} | \psi_{\theta, \chi, \phi_{1},\phi_{2}} \rangle.
\end{align}
\end{widetext}

Then we obtain the matrix elements
\begin{widetext}
    \begin{align}
    \langle \mathbf{s} | \psi_{\theta, \chi, \phi_{1},\phi_{2}} \rangle = \prod_{j=1}^{L} (\cos(\theta_{j}/2) \delta_{s_{j},0} + e^{i \alpha_{j}} \sin(\theta_{j}/2)\cos(\chi_{j}/2) \delta_{s_{j},1} +e^{i \beta_{j}} \sin(\theta_{j}/2)\sin(\chi_{j}/2) \delta_{s_{j},2} ), 
\end{align}
\end{widetext}
where we denote the basis states as $\ket{0},\ket{1}$ and $\ket{2}$ with eigenvalues $1, \omega, \omega^{2}$ respectively,
and 

\begin{align}
\braket{\mathbf{s} | U_{KI_{3}} | \mathbf{r} } &= \exp\left[\sum_{j=1}^{L}-i ( J (2 \Re(s_{j}s_{j+1}) + h_{j} 2 \Re(s_{i})) \right]  \nonumber \\
&\times \exp\left[\sum_{j=1}^{L}-i (k_{0} + k_{1}(2 \Re(s_{j}r_{j}) ) \right] , 
\end{align}

where $k_{0}=\alpha+ \beta/3 $ and $k_1= \beta/3$ and $\alpha, \beta$ are given by Eq. \eqref{eqn:alphabeta}. Notice that $k_{1}= J = \frac{2 \pi}{9},\frac{4 \pi}{9}$ at the dual-unitary point
\begin{equation}
    \braket{s | e^{i b (X + X^{\dagger})} | r}= \frac{1}{3} e^{-2 i b} (1+ 2 e^{3 i b} - 3 \delta_{sr}e^{3 i b}), \, s \in \{1, \omega, \omega^{2} \}. 
\end{equation}
Putting this all together, we  have 
\begin{widetext}
\begin{align}
  &\langle \mathbf{a},\mathbf{b}| (U_{KI_{3}}[\mathbf{h}])^{t})|\psi_{\theta, \chi, \phi_{1},\phi_{2}} \rangle=  \sum_{\{s_{j,\tau}\}} \exp\left[\sum_{j=1}^{L}\sum_{\tau=1}^{t}-i ( J (2 \Re(s_{j,\tau}s_{j+1,\tau}) + 2 h_{j} \Re(s_{i,\tau})) - i\sum_{j=1}^{L}\sum_{\tau=1}^{t-1} (k_{0}+J(2 \Re(s_{j,\tau}s_{j,\tau+1}) ) \right] \nonumber \\ 
&  \times 
 \exp\left[\sum_{j=1}^{N}-i (k_{0} + J (2 \Re(s_{j,t}a_{j}) ) \right] \times \exp\left[\sum_{j=N+1}^{L}-i (k_{0} + J(2 \Re(s_{j,t}b_{j-N}) ) \right] \nonumber \\
&\times \prod_{j=1}^{L} (\cos(\theta_{j}/2) \delta_{s_{1,j},0} + e^{i \alpha_{j}} \sin(\theta_{j}/2)\cos(\chi_{j}/2) \delta_{s_{1,j},1} +e^{i \beta_{j}} \sin(\theta_{j}/2)\sin(\chi_{j}/2) \delta_{s_{1,j},2} ). 
\end{align}
\end{widetext}
Note that $\exp[- i k_{0} L t]$ which can evaluate the self dual point, using Eq. \eqref{eqn:alphabeta}, and obtain $\exp[- i k_{0} L t]= \sqrt{3}^{L t} \times (\text{phase})$. We ignore this global phase factor in our calculations.

Next, we use the replica trick and see that there are basically $2n$ different building block. It is easily done by noting using $\nu$ for parameterization which basically goes from $\nu = 1, \cdots,  2n$ such that $\nu \leq n$ corresponds to forward time evolution and $\nu \geq n+1$ corresponds to backward time evolution. Finally, we obtain, after summing over $\{\mathbf{a}_{j},\mathbf{b}_{j} \}$

\begin{widetext}
  \begin{align}
&\tr[(\rho_{A}(t))^{n}] \nonumber \\
&= \frac{1}{3^{nLt}}\sum_{\{s_{\nu,\tau,j}\}} \exp\left[\sum_{j=1}^{L} \sum_{\nu=1}^{2n} \mathrm{sgn}(n-\nu)\left(  \sum_{\tau=1}^{t}-i ( J (2 \Re(s_{j,\tau}s_{j+1,\tau}) + 2 h_{j} \Re(s_{i,\tau})) - i \sum_{\tau=1}^{t-1} (J(2 \Re(s_{j,\tau}s_{j,\tau+1}) ) \right )\right ] \nonumber \\
& \times \prod_{\nu=1}^{n} \left\{ \prod_{j=1}^{N}\left( 1+ s_{\nu,t,j} s^{-1}_{\nu+n,t,j} + (s_{\nu,t,j} s^{-1}_{\nu+n,t,j})^{2} \right ) \prod_{j= N+1}^{L}\left( 1+ s_{\nu,t,j} s^{-1}_{n+1 + \textrm{mod}(\nu-2,n),t,j} + (s_{\nu,t,j} s^{-1}_{n+1 +\textrm{mod}(\nu-2,n),t,j})^{2} \right ) \right\} \nonumber \\
&\times \prod_{\nu=1}^{2n} \prod_{j=1}^{L} (\cos(\theta_{j}/2) \delta_{s_{\nu,1,j},0} + e^{i \alpha_{j} \mathrm{sgn}(n-\nu)} \sin(\theta_{j}/2)\cos(\chi_{j}/2) \delta_{s_{\nu,1,j},1} +e^{i \beta_{j}\mathrm{sgn}(n-\nu)} \sin(\theta_{j}/2)\sin(\chi_{j}/2) \delta_{s_{\nu,1,j},2} ), 
\end{align}   
\end{widetext}
where $\textrm{sgn}(x)$ is the sign function with the convention $\textrm{sgn}(0)=1$  and $\textrm{mod}(m,n)= m\, \textrm{mod}\, n$.
Where we use the identity

\begin{align}
   \sum_{a\in \{1,\omega, \omega^{2}\}} &\exp \left [ -i \frac{4 \pi}{9 } \Re (a (s-r)) \right]= 3 \delta_{sr}, \nonumber \\ 
    &= 1+sr^{-1} + (sr^{-1} )^{2} , \quad s, r \in \{1,\omega, \omega^{2}\}.
\end{align}

The above expression can be easily understood by considering the tensor product space $\mathcal{H}_{t}^{\otimes 2n}$ through their matrix elements in the computation basis
\begin{widetext}
\begin{align}\label{eqn:tmatelement}
  & \braket{ \{\mathbf{s}_{\nu,\tau}\} | \mathbb{T} | \{\mathbf{r}_{\nu,\tau}\}} = \frac{1}{3^{(t-1)n}}\exp\left[ \sum_{\nu=1}^{2n} \mathrm{sgn}(n-\nu)\left(  \sum_{\tau=1}^{t}-i ( J (2 \Re(s_{\nu,\tau}s_{\nu+1,\tau}) + h_{j} \Re(s_{\nu,\tau})) - i \sum_{\tau=1}^{t-1} (J(2 \Re(s_{\nu,\tau}s_{\nu,\tau+1}) ) \right )\right ] \nonumber \\
& \times \prod_{\nu=1}^{n} \left\{ \frac{\left( 1+ s_{\nu,t} s^{-1}_{\nu+n,t} + (s_{\nu,t} s^{-1}_{\nu+n,t})^{2} \right )}{3}  \right\} \nonumber \\
&\times \prod_{\nu=1}^{n}  (\cos(\theta/2) \delta_{s_{\nu,1},1} + e^{i \alpha\, \mathrm{sgn}(n-\nu)} \sin(\theta/2)\cos(\chi/2) \delta_{s_{\nu,1},\omega} +e^{i \beta\, \mathrm{sgn}(n-\nu)} \sin(\theta/2)\sin(\chi/2) \delta_{s_{\nu,1},\omega^2} ) 
\end{align}

\text{and}
\begin{align}\label{eqn:rmatelement}
& \braket{ \{\mathbf{s}_{\nu,\tau}\} | \mathbb{R} | \{\mathbf{r}_{\nu,\tau}\}} = \frac{1}{3^{(t-1)n}} \exp\left[ \sum_{\nu=1}^{2n} \mathrm{sgn}(n-\nu)\left(  \sum_{\tau=1}^{t}-i ( J (2 \Re(s_{\nu,\tau}s_{\nu+1,\tau}) + 2 h_{j} \Re(s_{\nu,\tau})) - i \sum_{\tau=1}^{t-1} (J(2 \Re(s_{\nu,\tau}s_{\nu,\tau+1}) ) \right )\right ] \nonumber \\
& \times \prod_{\nu=1}^{n} \left\{ \frac{\left( 1+ s_{\nu,t} s^{-1}_{n+1 + \textrm{mod}(\nu-2,n),t} + (s_{\nu,t} s^{-1}_{n+1 +\textrm{mod}(\nu-2,n),t})^{2} \right )}{3} \right\} \nonumber \\
&\times \prod_{\nu=1}^{2n}  (\cos(\theta/2) \delta_{s_{\nu,1},1} + e^{i \alpha \mathrm{sgn}(n-\nu)} \sin(\theta/2)\cos(\chi/2) \delta_{s_{\nu,1},\omega} +e^{i \beta \mathrm{sgn}(n-\nu)} \sin(\theta/2)\sin(\chi/2) \delta_{s_{\nu,1},\omega^2} ), 
\end{align} 
\end{widetext}
where we use $\mathbb{R}[h] \equiv \mathbb{R}_{\theta_{i},\chi_{i},\alpha_{i}, \beta_{i}}[h]$ and $\mathbb{T}[h] \equiv \mathbb{T}_{\theta_{i},\chi_{i},\alpha_{i}, \beta_{i}}[h]$ for brevity.

\begin{align}\label{eqn:trrhofull}
\tr[(\rho_{A}(t))^{n}]= \tr\left[\left( \prod_{i=1}^{N} \mathbb{T}[h_{j}] \right) \left( \prod_{i=N+1}^{L} \mathbb{R}[h_{j}]  \right) \right]. 
\end{align}

Now if we reinterpret $\mathcal{H}_{t}^{\otimes 2n}$ instead as tensor product of two copies of $\mathcal{H}_{nt}$. The first $n$ copies are group together and last $n$ copies are grouped together and one can write the basis elements of $\mathcal{H}_{t}^{\otimes 2n}$ as 
\begin{equation}
    \ket{\{r_{j,\tau}\}_{1 \leq \tau \leq t}^{1 \leq j \leq 2n}}=  \ket{\{r_{j,\tau}\}_{1 \leq \tau \leq t}^{1 \leq j \leq n}} \otimes  \ket{\{r_{j,\tau}\}_{1 \leq \tau \leq t}^{n+1 \leq j \leq 2n}}.
\end{equation}

Then we can see that $\mathbb{T}$ and $\mathbb{R}$ only differs via a cyclic permutation of copies of $\mathcal{H}_{t}$ composing of negative-time space. Thus we have
\begin{equation}   
\mathbb{R}[h]= \mathbb{P}\mathbb{T}[h], \mathbb{P}^{\dagger}
\end{equation}
where we define 
\begin{equation}
    \mathbb{P}= \mathbb{1} \otimes \prod_{\nu=1}^{n}\prod_{\tau=1}^{t} P_{(\nu,\tau),(\nu-1,\tau)}.
\end{equation}
where $P_{ij} = \frac{1}{3}\sum_{a,b} X_{i}^{a}Z_{j}^{b} \otimes Z_{j}^{-b} X_{i}^{-a}$
is a transposition operator for qutrits. So $\mathbb{P}$ only acts on the negative-time space. 

Writing Eq. \eqref{eqn:tmatelement} in matrix form, we can see that the transfer matrix is a tensor product of single copy transfer matrices\footnote{This time the labeling is again from 1  to $\nu$ for each copy and the product is understood as acting non-trivially only on the $\nu$-th copy. }

\begin{equation}
    \mathbb{T}[h] = \prod_{\nu=1}^{n} \mathbb{T}^{(\nu)}[h].
\end{equation}

The single copy matrix can thus be written as 
\begin{equation}
    \mathbb{T}^{(\nu)}[h]= \mathbb{H}_{\nu,1}[\theta,\chi] .\mathcal{P}^{(Z)}_{\nu,t}. \mathbb{U}_{\alpha,\beta}^{(\nu)}[h]. 
\end{equation}
Here $\mathbb{H}[\theta,\chi]$ is Hermitian matrix, $\mathcal{P}$ is a projector and $\mathbb{U}[h] $ is unitary (only because we are at the dual-unitary point) are defined as follows (after conveniently dropping the global phase factor)-

\begin{widetext}
    \begin{align}
    \mathbb{H}_{\nu,1}[\theta,\chi]&= 3 \left( (\cos(\frac{\theta}{2}) P^{(0)}_{\nu, 1} +  \sin(\frac{\theta}{2}) \cos(\frac{\chi}{2}) P^{(1)}_{\nu, 1} + \sin(\frac{\theta}{2})\sin(\frac{\chi}{2}) P^{(2)}_{\nu, 1} \right )^{\otimes 2}, \\
    \mathcal{P}^{(Z)}_{\nu,\tau}&=  \frac{\mathbb{1} + Z_{\nu,\tau} \otimes Z_{\nu,\tau}^{-1} + Z_{\nu,\tau}^{2} \otimes Z_{\nu,\tau}^{-2}}{3}, \\
    \mathbb{U}_{\alpha,\beta}^{(\nu)}[h]&= U_{\nu,\alpha,\beta}e^{-i h M_{\nu}^{z}}e^{-i J M_{\nu}^{x}} \otimes U^{\dagger}_{\nu,\alpha,\beta}e^{i h M_{\nu}^{z}}e^{i J M_{\nu}^{x}}, \\
    U_{\nu, \alpha,\beta} &= \exp \left[- i J \sum_{\tau=1}^{t-1}( Z_{\nu,\tau} Z^{\dagger}_{\nu,\tau+1} +Z^{\dagger}_{\nu,\tau} Z_{\nu,\tau+1} )  - i \alpha P^{(1)}_{\nu,1} - i \beta P^{(2)}_{\nu,1}  \right],
\end{align}
\end{widetext}
\begin{align}
    M_{\nu}^{Z} &= \sum_{\tau=1}^{t}(Z_{\nu,\tau} + Z^{\dagger}_{\nu,\tau}), \\
    M_{\nu}^{X} &= \sum_{\tau=1}^{t}(X_{\nu,\tau} + X^{\dagger}_{\nu,\tau}), \\
P^{(k)}_{\nu, \tau}&= \frac{1}{3} ( I + \omega^{-k} Z_{\nu,\tau} + \omega^{-2k} Z_{\nu,\tau}^{2}),
\end{align}
which is the projector\footnote{Consider $Z \ket{l}= \omega^{l} \ket{l}$ then we have $P^{k}\ket{l}= \frac{1}{3} (1+ \omega^{k-l}+ \omega^{2(k-l)} )$ which is 1 when $k=l$ so it only projects onto $\omega^{k}$ eigenvalues and is zero otherwise.}. 

\section{Spread of entanglement}\label{sec5}

We now use the properties of transfer matrix $\mathbb{T}$ as derived in appendix \eqref{sec:tprop} and \eqref{sec:tevec}. 
\begin{prop}
Transfer matrix $\mathbb{T}[h]$ has following properties
\begin{itemize}
    \item[\text{(i)}] $|\lambda_{i}| \leq 3^n \max \left(\cos ^{2 n}\left(\frac{\theta}{2}\right),\, \sin ^{2 n}\left(\frac{\theta}{2}\right) \left(\frac{1}{2}(| \cos (\chi)|+1)\right)^n\right) $, $\forall \lambda_{i} \in \textrm{Spec}(\mathbb{T}[h])$.
     \item[\text{(ii)}] If $\lambda$ is an eigenvalues of $\mathbb{T}[h]$ such that $|\lambda|= \lambda_{max}$, then 
     \begin{itemize}
         \item[(a)] $\lambda$ has trivial Jordon block i.e. its geometric and algebraic multiplicities coincide. 
         \item[(b)] The associated left eigenvector $\bra{V_{L}}$ has the following properties
          \begin{align}
              \bra{V_{L}} \mathbb{H}_{\nu,1}[\theta,\chi] &= \lambda_{\mathrm{max}} \bra{V_{L}} \label{eqn:hprop},\\
              \bra{V_{L}} \mathcal{P}^{(Z)}_{\nu,t}[\theta,\chi] &= \bra{V_{L}} \label{eqn:pprop}, \\
              \bra{V_{L}} \mathbb{U}^{(\nu)}_{\alpha,\beta}[h] &= e^{i \phi_{\nu}} \bra{V_{L}} \label{eqn:uprop}.
          \end{align}
     \end{itemize}
\end{itemize}
\end{prop}

We further have 
\begin{prop}
The spectrum of $\mathbb{T}[h]$ is fully determined as follows
\begin{enumerate}
    \item $\text{Spec}(\mathbb{T}[h])= \{0,1\}$.
    \item The geometric multiplicity of the eigenvalue 1 is one. 
\end{enumerate}
\end{prop}
This means that maximum eigenvalue property can be saturated only when $\lambda_{\textrm{max}}=1$ which is obtained for 
\begin{align}
\cos \left(\frac{\theta}{2}\right)&= \sqrt{\frac{1}{3}}, \quad  \sin\left(\frac{\theta}{2}\right) \cos\left(\frac{\chi}{2}\right)= \sqrt{\frac{1}{3}}, \nonumber \\  &\sin\left(\frac{\theta}{2}\right) \sin\left(\frac{\chi}{2}\right)= \sqrt{\frac{1}{3}}
\end{align}
Using which we obtain one such state as
\begin{equation} \label{eqn:solvableval}
 \cos \left(\frac{\theta}{2}\right)= \sqrt{\frac{1}{3}} , \quad  \sin \left(\frac{\theta}{2}\right)= \sqrt{\frac{2}{3}}, \quad \chi = \frac{\pi}{2}.
\end{equation}

Putting everything together we have 
\begin{equation}\label{eqn:san_final}
 S_{A}^{(n)}(t)= \frac{1}{1-n} \log \tr \left[ \left( \prod_{j=1}^{N} \mathbb{T}[h_{j}] \right) \mathbb{P} \left( \prod_{j=N+1}^{L} \mathbb{T}[h_{j}]  \right) \mathbb{P}^{\dagger}  \right]
\end{equation}
Taking the large $L$ limit we obtain
\begin{equation}
\lim_{L \rightarrow \infty} S_{A}^{(n)}(t)= \frac{1}{1-n} \log  \left[ \bra{\mathbb{1}} \mathbb{P}^{\dagger}\left( \prod_{j=1}^{N} \mathbb{T}[h_{j}] \right) \mathbb{P} \ket{\mathbb{1}}   \right].
\end{equation}
If we further take $N \rightarrow \infty$ we obtain

\begin{equation}
\lim_{N \rightarrow \infty}\lim_{L \rightarrow \infty} S_{A}^{(n)}(t)= \frac{2}{1-n} \log  | \braket{\mathbb{1}|\mathbb{P}^{\dagger} |\mathbb{1}}= 2 t\log(3).
\end{equation}

However, it is also possible to evaluate Eq.\eqref{eqn:san_final} for finite $N$. Using calculation as shown in appendix \ref{append:trfiniten} we obtain
\begin{widetext}
    \begin{align}\label{eqn:tfinal}
\langle\Psi|\left(\prod_{j=1}^N \mathbb{T}\left[h_j\right]\right)|\Psi\rangle = \braket{ \Psi|\prod_{\nu=1}^{n}\left[ \prod_{j=1}^{\lfloor N/2 \rfloor-1} [\mathcal{P}_{\nu,t-j}^{(Z)} \mathcal{C}_{\nu,t-j}^{(X)}  \right] [ \mathcal{P}_{\nu,t- \lfloor N/2 \rfloor}^{(Z)}]^{\mathrm{mod}(N,2)}  |\Psi},
\end{align}
\end{widetext}
where $\mathcal{C}_{\nu,\tau}^{(X)}= \frac{1}{3}( \mathbb{1}+ X_{\nu,\tau}\otimes X_{\nu,\tau}+ X_{\nu,\tau}^{\dagger}\otimes X_{\nu,\tau}^{\dagger}).$

We next note the following properties
\begin{widetext}
   \begin{align}
    \bra{p'} \otimes \bra{q'}|\mathbb{1}| \ket{p'} \otimes \ket{q'}= \delta_{pp'}\delta_{qq'}, \\
    \bra{p'} \otimes \bra{q'}|\frac{1}{3}(\mathbb{1}+ Z \otimes Z^{\dagger} + Z^{\dagger} \otimes Z)| \ket{p} \otimes \ket{q}= \delta_{pq} \delta_{p'p}\delta_{q'q}, \\
    \bra{p'} \otimes \bra{q'}|\frac{1}{3}(\mathbb{1}+ Z \otimes Z^{\dagger} + Z^{\dagger} \otimes Z) \frac{1}{3}(\mathbb{1}+ X \otimes X + X^{\dagger} \otimes X^{\dagger})| \ket{p} \otimes \ket{q}= \frac{1}{3} \delta_{p'q'}\delta_{pq}.
\end{align} 
\end{widetext}
which allow us to evaluate Eqn. \eqref{eqn:tfinal} exactly and we obtain 
\begin{equation}
    \langle\Psi|\left(\prod_{j=1}^N \mathbb{T}\left[h_j\right]\right)|\Psi\rangle = \begin{cases}
3^{N(1-n)} &t> \lfloor N/2 \rfloor,\\
3^{2t(1-n)} &t< \lfloor N/2 \rfloor.
\end{cases}
\end{equation}

This also implies that the spectrum of the reduced density matrix is given as
\begin{equation}
   \textrm{Spec}[\rho_{A}(t)]= \{3^{-\min(2t,N)},0\},   
\end{equation}
where $3^{-\min(2t,N)}$ has multiplicity $3^{\min(2t,N\}}$, while $0$ has multiplicity $3^N-3^{\min(2t,N\}}$. This then implies that all the R\'enyi entropies are same.

\section{Numerical study} \label{sec6}
\subsection{Solvable states}
For the numerical results we take $L=18$ qutrits and evaluate von Neuman entanglement entropy of a subsystem of $N=6$ qutrits. As shown in Fig. \eqref{fig:eeim3} we obtain the following for the $Z_{3}$ Ising model 
\begin{equation}
    S(t) = \min(2 t, N) \log(3).
\end{equation}

\begin{figure}[H]
    \centering
    \includegraphics[width=0.5\textwidth]{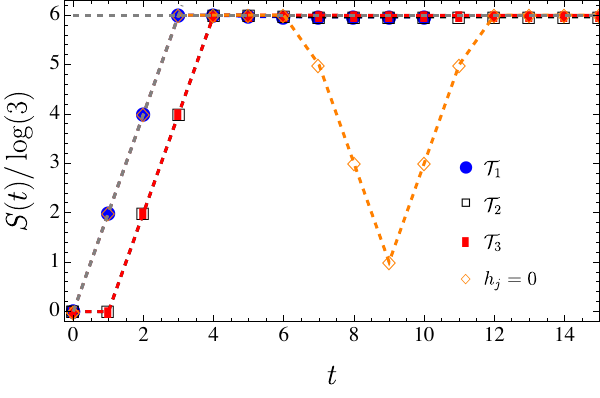}
    \caption{Spread of entanglement for $q= 3$ KPM with $L=18$, Eq. \eqref{eqn:ising3}. We consider the initial states as the three type of solvable states $\mathcal{T}_{1}, \mathcal{T}_{2}$ and $\mathcal{T}_{3}$. Also shown is the results for $\mathcal{T}_{1}$ state but with $h_{j}=0$. The results for $\mathcal{T}_{2}$ and $\mathcal{T}_{3}$ states are delayed by one period.  }
    \label{fig:eeim3}
\end{figure}

Similarly, for the case of for the $Z_{4}$ Ising model we consider $L = 15$ 
qudits (with local Hilbert space dimension of $4$)  and evaluate von Neumann entanglement entropy of a subsystem of $N=6$ qudits. As shown in Fig. \eqref{fig:eeim4}, we obtain results consistent with
\begin{equation}
    S(t) = \min(2 t, N) \log(4).
\end{equation}
As observed previously from Fig. \eqref{fig:eeim3}, the integrable model ($h_{j}=0$) exhibit oscillations after the initial linear growth. It is then natural to ask the fate of these oscillations for weak integrability breaking. We study the effect of weak integrability breaking in Fig. \ref{fig:eeimg3wib} where we study the $q=3$ kicked Potts model starting from a $\mathcal{T}_{1}$ type state and by consider various values of longitudinal field. It is observed that the oscillations reduces as the longitudinal field become stronger.

\begin{figure}[H]
    \centering
    \includegraphics[width= 0.95\linewidth]{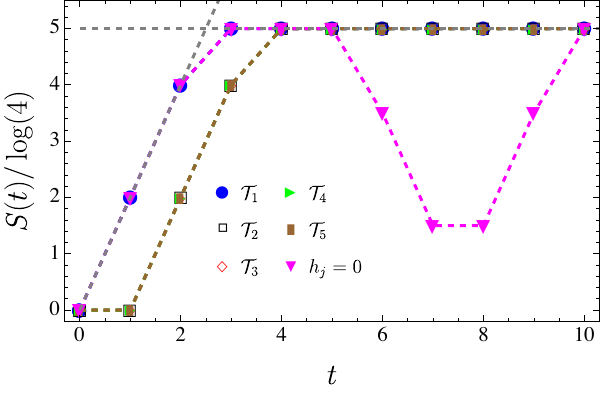}
    \caption{Time evolution of von  Neumann entanglement entropy for $q= 4$ model for a subsystem of $N=5$ qudits in a system of $L=15$ qudits KPM. The initial states are three type of solvable state $\mathcal{T}_{1}, \mathcal{T}_{2}$ and $\mathcal{T}_{3}$. Also shown is the results for $\mathcal{T}_{1}$ state but with $h_{j}=0$. The results for $\mathcal{T}_{2}$ and $\mathcal{T}_{3}$ states are delayed by one period.  }
    \label{fig:eeim4}
\end{figure}

\begin{figure}[H]
    \centering
    \includegraphics[width= \linewidth]{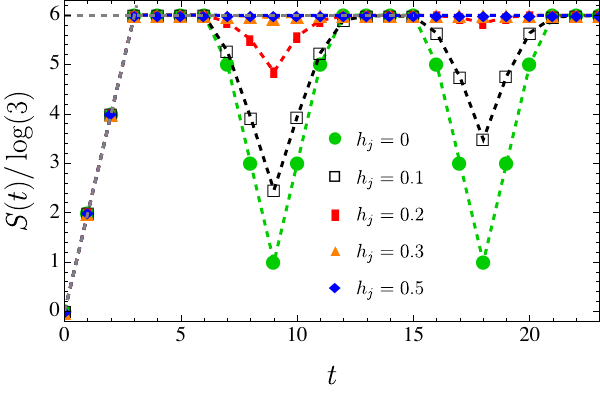}
    \caption{Time evolution of von  Neumann entanglement entropy for $q= 3$ model for a subsystem of $N=6$ qutrits in a system of $L=18$ qutrits KPM. For all the cases the initial state is of $\mathcal{T}_{1}$ type and we take different values for the longitudinal field.}\label{fig:eeimg3wib}
\end{figure}

\subsection{Generic states}
We now numerically study the fate of generic states, $ \ket{\psi_{\theta, \chi, \phi_{1},\phi_{2}}}$.
First we study the spread of entanglement at finite $L$ for generic states. As in previous section we study the chaotic and the integrable version of the model separately for their effects at finite system size. We observe from Fig. \eqref{fig:peeim3gsall} (a) corresponding to chaotic cases that the slope of the growth depends on the values of the parameters $\theta, \chi, \alpha$ and $\beta$. 
The saturation value, however, is universal and independent of the initial state,
showing its independence on the initial state. For the integrable case however, as shown in Fig. \eqref{fig:peeim3gsall} (b) we notice that the slope is no more universal and depends on the initial state chosen.

\begin{figure}[H]
    \centering
    \includegraphics[width= \linewidth]{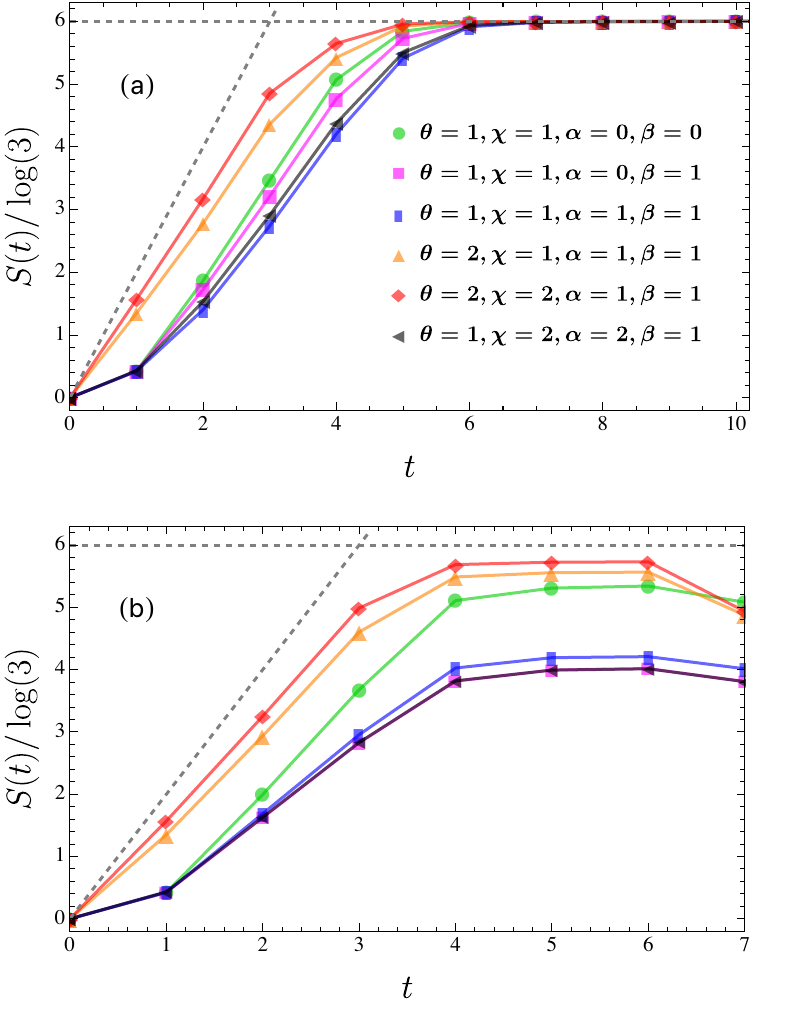}
    \caption{Time evolution of von  Neumann entanglement entropy for $q= 3$ model for a subsystem of $N=6$ qutrits in a system of $L=18$ qutrits KPM. The initial states are generic, $\ket{\psi_{\theta, \chi, \phi_{1},\phi_{2}}}$ ,and the corresponding parameter values are shown in the plots. (a)For $h_{j}=1$ (b)For $h_{j}=0$ which corresponds to integrable model.}\label{fig:peeim3gsall}
\end{figure}

Using the duality mapping we can also study the behavior of growth regime in the large $L$ and large $N$ limit. Following \cite{Bertini:2018fbz} we have following relation for the R\'enyi entropy in the thermodynamic limit
\begin{equation}\label{eq:renyithermo}
    \lim_{N\rightarrow \infty} \lim_{L\rightarrow \infty}S^{(n)}(t)=  \frac{2}{1-n} \log \left|\frac{\braket{L|\mathcal{P}|R}}{\braket{L|R}} \right|,
\end{equation}

where $\bra{L}$ and $\ket{R}$ are the left and right eigenvector respectively of $\mathbb{T}[h]$, corresponding to eigenvalue 1. 

\begin{figure}[H]
    \centering
    \includegraphics[width= \linewidth]{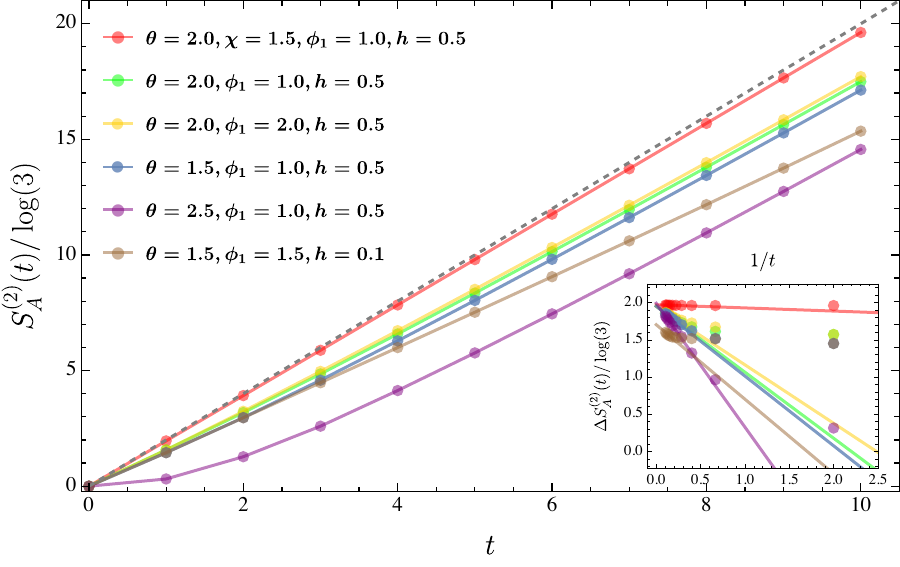}
    \caption{Time evolution of Renyi-2 entanglement entropy for KPM with $q= 3$ in thermodynamic limit. The initial states are generic, $\ket{\psi_{\theta, \chi, \phi_{1},\phi_{2}}}$ , and the corresponding parameter values are shown in the plots. We take $\theta= \chi$ and $\phi_{1}= \phi_{2}$ unless stated otherwise. The inset shows the instantaneous slope $\Delta S^{(\alpha)}_{A}(t-1/2)$[c.f. Eq. \eqref{eq:instslope}], as a function of $1/t$. The linear fit is obtained using values for $t\geq 5$.}\label{fig:eeim3thermoallp}
\end{figure}

Numerical evaluation of Eq. \eqref{eq:renyithermo} by further noting that $\ket{R}$ and $\bra{L}$ have a tensor product structure of the transfer matrix, thus we have
\begin{align}
 \ket{R}= \otimes_{\nu=1}^{n} \ket{V_{R}}, \quad \bra{L}= \otimes_{\nu=1}^{n} \ket{V_{L}}.  
\end{align}
where $\ket{V_{R}}, \ket{V_{L}} \in \mathcal{H}_{t} \otimes \mathcal{H}_{t}$. $\ket{V_{R}}$ and $\bra{V_{L}}$ can be obtained efficiently using the power method \cite{Bertini:2018fbz} and the R\'enyi entropy can then be written as 
\begin{equation}\label{eq:renyithermofinal}
    \lim_{N \rightarrow \infty} \lim_{L \rightarrow \infty} S^{(n)}_{A}(t)= \frac{2}{1-n} \log \left| \frac{\tr[(V^{\dagger}_{R}V_{L})^{n}]}{(\tr(V_{R}^{\dagger}V_{L}))^{n}}\right|.
\end{equation}
where $V_{L,R}$ are $3^{t} \times 3^{t}$ matrices corresponding to the vector $\bra{V_{L,R}}$ through vector operator correspondence. 

\begin{figure}[H]
    \centering
    \includegraphics[width= \linewidth]{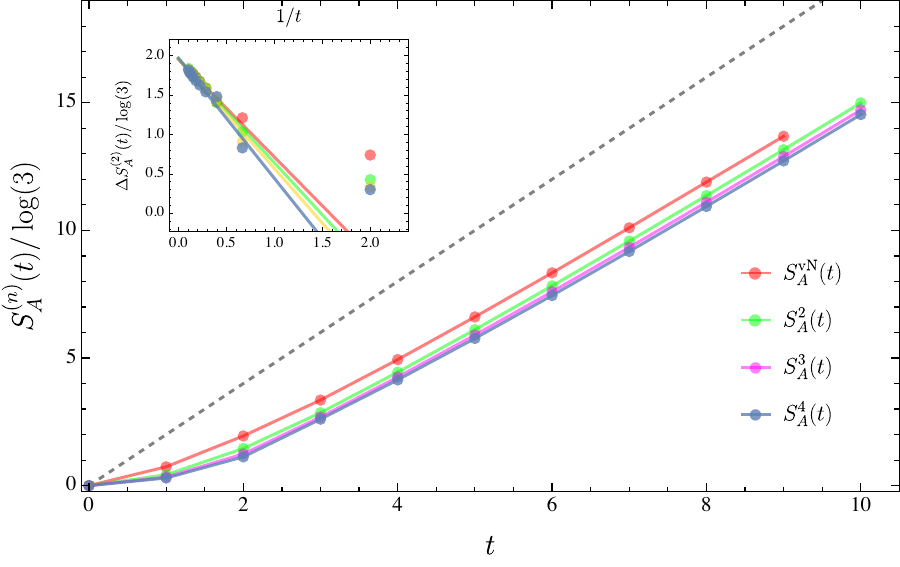}
    \caption{Time evolution of R\'enyi entropies, $\alpha=1,2,3,4$, for KPM with $q= 3$ in thermodynamic limit. The initial states are generic, with $\theta= \chi=\phi_{1}=\phi_{2}=1,h=0.5$. The inset shows the instantaneous slope $\Delta S^{(\alpha)}_{A}(t-1/2)$[c.f. Eq. \eqref{eq:instslope}], as a function of $1/t$. The linear fit is obtained using values for $t\geq 5$.}\label{fig:renyinthermoall}
\end{figure}

Using Eq. \eqref{eq:renyithermofinal}, for various generic states we obtain the result for R\'enyi-2, shown in Fig. \ref{fig:eeim3thermoallp}. We observe that for all the states the growth is approximately linear (excluding some small $t$ behavior) but the exact slope depends on the initial state chosen and field $h_{i}$. We however find that these deviations vanish at large $t$. To see this we obtain the instantaneous slope defined as follows
\begin{equation}\label{eq:instslope}
    \Delta S^{(\alpha)}_{A}(t-1/2)= S^{(\alpha)}_{A}(t)-S^{(\alpha)}_{A}(t-1).
\end{equation}
We observe that the slope becomes a linear function of $1/t$ at large times, see Fig. \ref{fig:eeim3thermoallp} inset. Extrapolating the results to $t=\infty$ we observe that for the chaotic case the results are consistent with 
\begin{equation}
   \lim _{t \rightarrow \infty } \frac{\Delta S^{(\alpha)}_{A}(t)}{\log(3)} \Big|_{h \neq 0}=2.
\end{equation}
On the other hand for the integrable case we find the results are consistent with 
\begin{equation}
   \lim _{t \rightarrow \infty } \frac{\Delta S^{(\alpha)}_{A}(t)}{\log(3)}\Big|_{h = 0} \leq 2.
\end{equation}

Similarly, in Fig. \ref{fig:renyinthermoall} we obtained the result for various R\'enyi entropies. It can be seen that for generic states the R\'enyi entropies have different slope, with different R\'enyi index. However, a careful analysis of instantaneous slope reveals that the differences vanish in the $t \rightarrow \infty$ limit. 
\section{Outlook}
In this work, we investigated the dynamics of entanglement in generalized kicked Ising chains with local Hilbert-space dimension $q>2$. We demonstrated that dual-unitary points persist for $q=3$ and $q=4$, allowing for exact analytical treatment of entanglement growth, while such points cease to exist for $q\ge5$. Our results therefore reveal that dual unitarity, although remarkably robust in low-dimensional systems, becomes increasingly constrained as the local Hilbert-space dimension grows.
This $q=4$ threshold is reminiscent of the well-known change in the two-dimensional $q$-state Potts/parafermion story: the ferromagnetic Potts transition is continuous for $q\leq 4$, while it becomes first order for $q>4$. The continuous low-$q$ transitions are described by the $\mathbb{Z}_q$ parafermion CFTs whereas for $q>4$ the generic Potts transition no longer realizes this critical behavior~\cite{FateevZamolodchikov1985}. It would be interesting to understand whether the obstruction to dual unitarity found here has any deeper relation to this familiar $q=4$ boundary.

A central outcome of our analysis is that the breakdown of dual unitarity for large $q$ originates from an obstruction in simultaneously equalizing all transition amplitudes generated by the kick operator. Physically, this reflects the increasing number of in-equivalent interference channels between different clock states in higher-dimensional local Hilbert spaces. From this perspective, our work suggests that exact dual-unitary dynamics may be a rather special property of low-dimensional Floquet systems.

 One important question is related to  the large $q$ limit. Our asymptotic analysis revealed an intimate relation between dual-unitarity conditions and Bessel functions. It would be interesting to understand whether this structure admits a semiclassical or hydrodynamic interpretation in the large local-dimension limit, potentially connecting dual-unitary dynamics to broader notions of quantum chaos and information spreading in many-body systems. It would also be interesting to explore more general Floquet circuits and clock models beyond the kicked Ising-type construction studied here. In particular, one may ask whether alternative gate structures, additional symmetries, or multi-parameter driving protocols could restore dual-unitary behavior even at higher local dimensions. Finally, our exact results for entanglement growth in non-qubit systems provide a useful benchmark for future studies of quantum information dynamics in higher-dimensional local Hilbert spaces. We hope that our work stimulates further investigations into the interplay between dual unitarity, scrambling, and the structure of local Hilbert spaces in quantum many-body systems.

\textbf{Acknowledgments:} Tanay is supported by JST CREST (Grant No. JPMJCR24I2). A part of numerical calculations have been performed using the computational facilities of the Yukawa Institute for Theoretical Physics.
HE is supported by JST CREST (Grant
No. JPMJCR24I3). 
TP acknowledges
support from the European Research Council (ERC) through the Advanced Grant QUEST (Grant Agreement No. 101096208) and the Slovenian Research and Innovation agency (ARIS) through the Program P1-0402.

\twocolumngrid
\bibliography{references}
\hbox{}\thispagestyle{empty}\newpage

\onecolumngrid
\raggedbottom
\appendix
\renewcommand{\theequation}{\Alph{section}\arabic{equation}}
\numberwithin{equation}{section}
\section{Particle time duality for $q=3$ Kicked Potts model}\label{append:ptduality}
In this section we focus dual-unitary in $q=3$, KPM. For this model we have the following $Z$ and $X$ operators with $\omega= e^{2 \pi i/3}$
\begin{equation}\label{eqn:z3x3}
Z=\left(\begin{array}{ccc}
1 & 0 & 0 \\
0 & \omega & 0 \\
0 & 0 & \omega^2
\end{array}\right), \quad X=\left(\begin{array}{ccc}
0 & 1 & 0 \\
0 & 0 & 1 \\
1 & 0 & 0
\end{array}\right) .
\end{equation}
and the property that $X^{3}= Z^{3}= 1$.

Furthermore, we notice that all the kicks factorizes so we only need to find what does the kick $i$ does to the basis states to calculate the relevant contribution to the partition function. It suffices then to find $e^{-i b (X + X^{\dagger})}$. It can be easily found to be 
\begin{equation}
 e^{-i b (X + X^{\dagger})} =    \left(
\begin{array}{ccc}
 \frac{1}{3} e^{-2 i b} \left(1+2 e^{3 i b}\right) & -\frac{1}{3} e^{-2 i b} \left(-1+e^{3 i b}\right) & -\frac{1}{3} e^{-2 i b} \left(-1+e^{3 i b}\right)  \\
 -\frac{1}{3} e^{-2 i b} \left(-1+e^{3 i b}\right) & \frac{1}{3} e^{-2 i b} \left(1+2 e^{3 i b}\right) & -\frac{1}{3} e^{-2 i b} \left(-1+e^{3 i b}\right) \\
 -\frac{1}{3} e^{-2 i b} \left(-1+e^{3 i b}\right) & -\frac{1}{3} e^{-2 i b} \left(-1+e^{3 i b}\right) & \frac{1}{3} e^{-2 i b} \left(1+2 e^{3 i b}\right) \\
\end{array}
\right)
\end{equation}

Since the basis for the above is the $Z$ basis, we readily get the contribution
\begin{align} \label{eqn:xkick}
    \braket{p|e^{-i b (X_{j} + X_{j}^{\dagger})}|p} &=  \frac{1}{3} e^{-2 i b} \left(1+2 e^{3 i b}\right) \equiv A, \nonumber \\
    \braket{p|e^{-i b (X_{j} + X_{j}^{\dagger})}|q}&=  \frac{1}{3} e^{-2 i b} \left(1- e^{3 i b}\right) \equiv B \, , p \neq q.
\end{align}
Consider following parameters
\begin{equation}\label{eqn:alphabeta}
    \alpha= -i \log(B),\quad \beta= -i (\log(A) - \log(B)).
\end{equation}
This allow us to write the contribution as a single exponential
\begin{align}
    \braket{p|e^{-i b (X_{j} + X_{j}^{\dagger})}|q} &=  e^{-i (\alpha + \beta \delta_{pq})}.
\end{align}
we can check if $p = q$ we get $A$ and $B$ otherwise. To be noticed here $p$ and $q$ are the time indices here. We make few more changes by noticing that we can write the $\delta_{pq}$ as 
\begin{align}
    \delta_{pq} &= \frac{1}{3}(1+ s_{p}s^{*}_{q} + (s_{p}s^{*}_{q})^{2}), \quad s_{i}= \omega^{i}.
\end{align}
We can thus write the contribution of the kick as 
\begin{align}
    \braket{p|e^{-i b (X_{j} + X_{j}^{\dagger})}|q} &=  e^{-i (\alpha + \beta \frac{1}{3}(1+ s_{p}s^{*}_{q} + (s_{p}s^{*}_{q})^{2}))}, \nonumber \\
  \braket{p|e^{-i b (X_{j} + X_{j}^{\dagger})}|q} &=  e^{-i (k_{0} + k_{1}( s_{p}s_{q}^{*} + s^{*}_{p}s_{q}))},
\end{align}
with $k_{0}=\alpha+ \beta/3 $ and $k_1= \beta/3$.

The above two relations make the duality in space time explicit by noting that if exchange $n \leftrightarrow \tau$ and $J \leftrightarrow k_{1}$. For unitarity in both the direction we require $J= k_{1}= \frac{\beta}{3}$ which can be found to be at $J = b = \frac{2 \pi}{9}, \frac{4 \pi}{9}$. To see this, we first note that 
\begin{equation}
 k_{1}= -\frac{i}{3} \log(A/B)= -\frac{i}{3} \log\left(\frac{1 + 2 e^{3 i b}}{1-e^{3 i b}} \right) .   
\end{equation}
Now we can demand $|1 + 2 e^{3 i b}|= |1-e^{3 i b}|$ (so that $k_1$ is real) which gives us $\Re(e^{- 3 i b})= -1/2$ this gives us $b= \frac{2 \pi}{9},\frac{4 \pi}{9}$. Putting this value back in $k_{1}$ we get 
\begin{align}
  k_{1}=   -\frac{i}{3} \log\left(\frac{1 + 2 e^{ 2 \pi i/3}}{1-e^{2\pi i /3}} \right) . 
\end{align}
The term inside the $\log$ can be evaluated to be $e^{2 \pi i /3}$ which thus give 
\begin{align}
    k_{1} = - \frac{i}{3} \log(e^{2 \pi i /3})= \frac{2\pi}{9}
\end{align}
 and similarly for $b = \frac{4 \pi}{9}$. This show that these are indeed self-dual points as shown in \cite{PhysRevB.105.144306,Claeys:2024tuy} using complex Hadamard matrices. 
\section{Particle time duality for $q=4$ Kicked Potts model}
Next, we briefly discuss the dual-unitary in $q=4$ Potts model. Following the strategy as in previous section we can determine the dual-unitary point to be $J = b = \frac{\pi}{4},\frac{3\pi}{4}$.
We can also use the connection of KPM to complex Hadamard matrix to obtain the dual-unitary points, following \cite{Claeys:2024tuy}. We first note that we have following $Z$ and $X$ matrices with $\omega= e^{2 \pi i/4}$.
\begin{align}
Z=\left(\begin{array}{cccc}
1 & 0 & 0 &0 \\
0 & \omega & 0 &0 \\
0 & 0 & \omega^2 &0 \\
0 & 0 & 0 & \omega^3
\end{array}\right), \quad
X=\left(\begin{array}{cccc}
0 & 1 & 0 & 0 \\
0 & 0 & 1 & 0 \\
0 & 0 & 0 & 1 \\
1 & 0 & 0 & 0
\end{array}\right) .
\end{align}
and the property that $X^{4}= Z^{4}= 1$.
This gives us the complex Hadamard matrices as 
\begin{equation}
H_{1}=     \left(
\begin{array}{cccc}
 -i & 1 & i & 1 \\
 1 & i & 1 & -i \\
 i & 1 & -i & 1 \\
 1 & -i & 1 & i \\
\end{array}
\right), \quad H_{2}=  \left(
\begin{array}{cccc}
 1 & -i & -1 & -i \\
 -i & 1 & -i & -1 \\
 -1 & -i & 1 & -i \\
 -i & -1 & -i & 1 \\
\end{array}
\right).
\end{equation}
These matrices can be shown to be equivalent to following complex Hadamard \cite{Tadej:2006ytn}
\begin{equation}
    F_4^{(1)}(a)=\left(\begin{array}{cccc}
1 & 1 & 1 & 1 \\
1 & ie^{i a} & -1 & -ie^{i a} \\
1 & -1 & 1 & -1 \\
1 & -i e^{i a} & -1 & i e^{i a}
\end{array}\right), \quad a \in[0, \pi).
\end{equation}

It is easy to show that 
\begin{align}
    H_{1} &= D_{1,1}. F_{4}^{(1)}(\pi/2) . P . D_{2,1}, \nonumber \\
    H_{2} &= D_{1,2}. F_{4}^{(1)}(\pi/2).D_{1,2}.
\end{align}
where we have 
\begin{align}
    D_{1,1} &= diag(-i,1,i,1), \nonumber \\
    D_{2,1} &= diag(1,i,-1,i), \nonumber \\
    P &= \left(
\begin{array}{cccc}
 1 & 0 & 0 & 0 \\
 0 & 0 & 0 & 1 \\
 0 & 0 & 1 & 0 \\
 0 & 1 & 0 & 0 \\
\end{array}
\right), \nonumber\\
D_{1,2}&=   diag(1,-i,-1,-i).
\end{align}

Also notice that unlike the case of $q=3$ when $H_{1}$ and $H_{2}$ were hermitian conjugate of each other,  we have two different matrices here. It is also to be noticed that the above matrices are obtained for the special point $J = b =  \frac{\pi}{4},\frac{3\pi}{4}$ which can be identified as the dual-unitary point.


\section{Properties of $\mathbb{T}[h]$}\label{sec:tprop}

The transfer matrix can be shown to have some simple properties. Specifically we can show that 

\begin{prop*}
Transfer matrix $\mathbb{T}[h]$ has following properties
\begin{itemize}
    \item[\text{(i)}] $|\lambda_{i}| \leq 3^n \max \left(\cos ^{2 n}\left(\frac{\theta}{2}\right),\, \sin ^{2 n}\left(\frac{\theta}{2}\right) \left(\frac{1}{2} (| \cos (\chi)| +1)\right)^n\right) $, $\forall \lambda_{i} \in \textrm{Spec}(\mathbb{T}[h])$.
     \item[\text{(ii)}] If $\lambda$ is an eigenvalues of $\mathbb{T}[h]$ such that $|\lambda|= \lambda_{max}$, then 
     \begin{itemize}
         \item[(a)] $\lambda$ has trivial Jordon block i.e. its geometric and algebraic multiplicities coincide. 
         \item[(b)] The associated left eigenvector $\bra{V_{L}}$ has the following properties
          \begin{align}
              \bra{V_{L}} \mathbb{H}_{\nu,1}[\theta,\chi] &= \lambda_{\mathrm{max}} \bra{V_{L}}, \label{eqn:hprop}\\
              \bra{V_{L}} \mathcal{P}^{(Z)}_{\nu,t}[\theta,\chi] &= \bra{V_{L}}, \label{eqn:pprop} \\
              \bra{V_{L}} \mathbb{U}^{(\nu)}_{\alpha,\beta}[h] &= e^{i \phi_{\nu}} \bra{V_{L}}. \label{eqn:uprop}
          \end{align}
     \end{itemize}
\end{itemize}
\end{prop*}
where $\textrm{Spec}(M)$ denotes the spectrum of  $M$.

\textit{Proof}-- Let $\bra{V}$ be vector so we have 
\begin{align}
    \braket{V| \mathbb{T}[h]\mathbb{T}^{\dagger}[h]|V} &= \braket{V| \mathbb{H}[\theta,\chi] .\mathcal{P}^{(Z)}_{\nu,t}. \mathbb{U}[h] | V}, \nonumber \\
    &\leq 3^n \max \left(\cos ^{2 n}\left(\frac{\theta}{2}\right),\, \sin ^{2 n}\left(\frac{\theta}{2}\right) \cos ^{2 n}\left(\frac{\chi}{2}\right),\, \sin ^{2 n}\left(\frac{\theta}{2}\right) \sin ^{2 n}\left(\frac{\chi}{2}\right)\right), \\
    &= 3^n \max \left(\cos ^{2 n}\left(\frac{\theta}{2}\right),\, \sin ^{2 n}\left(\frac{\theta}{2}\right) \left(\frac{1}{2} (| \cos (\chi)| +1)\right)^n\right).
\end{align}
Where the inequality follows from the fact that $\mathcal{P}^{(Z)}_{\nu,t}$ is a projector. Hence, its expectation value on a normalized state is less than or equal to 1.

If we now choose $\bra{V}$ to be the left eigenvector $\equiv \bra{V_{L}}$, of $\mathbb{T}[h]$ with eigenvalue $\lambda$ then we get the property 
\begin{equation}
    |\lambda| \leq 3^n \max \left(\cos ^{2 n}\left(\frac{\theta}{2}\right),\, \sin ^{2 n}\left(\frac{\theta}{2}\right) \left(\frac{1}{2} (| \cos (\chi)| +1)\right)^n\right)= \lambda_{\textrm{max}},
\end{equation}
which proves property (i). 

To prove property (ii a), we assume that the Jordon block of $\lambda$ is non trivial. Let $\bra{V_{L}}$ be the left eigenvector with corresponding eigenvalue $\lambda$ and $\bra{V'}$ be the first generalized eigenvector and since the Jordon block is non- trivial so we have 
\begin{equation}
    \bra{V'}\mathbb{T}[h]= \lambda \bra{V'} + a \bra{V}, \quad a \neq 0.
\end{equation}
This implies that we have 
\begin{equation}
    \bra{V'}\mathbb{T}[h]\mathbb{T}^{\dagger}[h] \ket{V'}= |\lambda_{max}|^{2} + |a|^{2},
\end{equation}
which is impossible as it contradicts property (i). This implies that Jordon block is trivial. 

The first and the second part of property (ii b) follows from fact that the \emph{equality} in property (i) is achieved when the expectation of projectors give 1. Finally since $\bra{V_{L}}$ is the left eigenvector of $\mathbb{T}[h]$ so we have $\bra{V_{L}} \mathbb{T}[h]= \lambda_{max}e^{i \phi} \bra{V_{L}}$. Now, using the first and the second part of property (ii b) the last part follows. 

We can also obtain few properties of transfer matrix using Eq. \eqref{eqn:trrhofull}. First we consider the case
\begin{equation}
    h_{i}= h, \theta_{i}= \theta, \chi_{i}= \chi, \alpha_{i}= \alpha, \beta_{i}= \beta. 
\end{equation}
Substituting $N=0$ in Eq. \eqref{eqn:trrhofull}, we obtain 
\begin{equation}
    \tr[(\mathbb{T}[h])^{L}]= \tr(\rho(t))=1.
\end{equation}
which follows from the fact that the original state is a pure state. This then implies that the eigenvalues of $\mathbb{T}[h]$ are all 0 except one which is 1. Also the corresponding Jordon block is one dimensional. thus we have following
\begin{prop*}
The spectrum of $\mathbb{T}[h]$ is fully determined as follows
\begin{enumerate}
    \item $\text{Spec}(\mathbb{T}[h])= \{0,1\}$.
    \item The geometric multiplicity of the eigenvalue 1 is one. 
\end{enumerate}
\end{prop*}
This means that maximum eigenvalue property can be saturated only when $\lambda_{\textrm{max}}=1$ which is obtained for 
\begin{equation}
\cos \left(\frac{\theta}{2}\right)= \sqrt{\frac{1}{3}}, \quad  \sin\left(\frac{\theta}{2}\right) \cos\left(\frac{\chi}{2}\right)= \sqrt{\frac{1}{3}}, \quad \sin\left(\frac{\theta}{2}\right) \sin\left(\frac{\chi}{2}\right)= \sqrt{\frac{1}{3}}.
\end{equation}
Using which we obtain one such state as
\begin{equation} \label{eqn:solvableval}
 \cos \left(\frac{\theta}{2}\right)= \sqrt{\frac{1}{3}} , \quad  \sin \left(\frac{\theta}{2}\right)= \sqrt{\frac{2}{3}}, \quad \chi = \frac{\pi}{2}.
\end{equation}
\section{Maximal eigenvalues and corresponding eigenvector}\label{sec:tevec}

We now determine the maximal eigenvalue of $\mathbb{T}[h]$, for specific values given by Eq. \eqref{eqn:solvableval}, and the associated eigenvector by searching for all eigenvector such that $\lambda_{\mathrm{max}}=1$. 

We first note that for special values given by Eq. \eqref{eqn:solvableval} we have
\begin{equation}
    \mathbb{H}_{\nu,1}[\theta,\chi] = \mathbb{1}\otimes \mathbb{1},
\end{equation}
hence Eq. \eqref{eqn:hprop} becomes trivial.

We then have the problem reduced to 
\begin{align}
    \bra{V_{L}} \mathcal{P}^{(Z)}_{\nu,t} &= \bra{V_{L}}, \\
\bra{V_{L}} \mathbb{U}^{(\nu)}_{\alpha,\beta}[h] &= e^{i \phi_{\nu}} \bra{V_{L}}, \quad \phi_{\nu} \in \mathbb{R} \quad \forall \nu.
\end{align}

To solve these equations we can use the following vector to operator map. For any operator $M$ we have the following
\begin{equation}
    \bra{M}= \sum_{jk} \braket{j | M |k} \bra{k} \otimes \bra{j}^{*}, 
\end{equation}
where $(\cdot)^*$ implies conjugation. It is also important for us to note $X$ and $Z$ in themselves do not form a group. that the full group algebra. However, if we consider operators $X^{a}Z^{b}$ for $a,b \in \{0,1,2\}$, then they span the algebra of group $\mathfrak{s l}(3, \mathbb{C})$.

Using this we can recast Eq. \eqref{eqn:pprop} and \eqref{eqn:uprop} as 
\begin{align}
    Z^{-1}_{\nu,t} V_{L}Z_{\nu,t} + Z_{\nu,t} V_{L}Z^{-1}_{\nu,t}& =  2 V_{L}\label{eqn:vlprop2a},  \\ 
    U_{\nu,\alpha,\beta}e^{-i h M_{\nu}^{z}}e^{-i J M_{\nu}^{x}} V_{L}&= e^{i \phi_{\nu}} V_{L}  U_{\nu,\alpha,\beta}e^{-i h M_{\nu}^{z}}e^{-i J M_{\nu}^{x}} \label{eqn:vlprop2b}.
\end{align}

Since $ Z_{\nu,t}$ is diagonal, then the first relation above implies following
\begin{equation}
    Z^{-1}_{\nu,t} V_{L}Z_{\nu,t} = Z_{\nu,t} V_{L}Z^{-1}_{\nu,t} = V_{L} \implies [V_{L}, Z_{\nu,t}]=[V_{L}, Z^{\dagger}_{\nu,t}]=0.
\end{equation}

Multiplying both sides of Eq. \eqref{eqn:vlprop2b} by $e^{i J M_{\nu}^{x}} e^{i h M_{\nu}^{z}} U^{\dagger}_{\nu,\alpha,\beta}$ we obtain 
\begin{equation}\label{eqn:vlprop2c}
    V_{L} e^{i J M_{\nu}^{x}} e^{i h M_{\nu}^{z}} U^{\dagger}_{\nu,\alpha,\beta} = e^{\phi_{\nu}} e^{i J M_{\nu}^{x}} e^{i h M_{\nu}^{z}} U^{\dagger}_{\nu,\alpha,\beta}V_{L}.
\end{equation}

Using Eq. \eqref{eqn:vlprop2a}, \eqref{eqn:vlprop2b} and \eqref{eqn:vlprop2c} we can see that 
\begin{equation}
V_{L} e^{i J M_{\nu}^{x}} e^{i h M_{\nu}^{z}} U^{\dagger}_{\nu,\alpha,\beta} Z_{\nu,t} U_{\nu,\alpha,\beta}e^{-i h M_{\nu}^{z}}e^{-i J M_{\nu}^{x}} = e^{i J M_{\nu}^{x}} e^{i h M_{\nu}^{z}} U^{\dagger}_{\nu,\alpha,\beta} Z_{\nu,t} U_{\nu,\alpha,\beta}e^{-i h M_{\nu}^{z}}e^{-i J M_{\nu}^{x}} V_{L}
\end{equation}
Using the relation $ e^{i J M_{\nu}^{x}} e^{i h M_{\nu}^{z}} U^{\dagger}_{\nu,\alpha,\beta} Z_{\nu,t} U_{\nu,\alpha,\beta}e^{-i h M_{\nu}^{z}}e^{-i J M_{\nu}^{x}}= e^{i J M_{\nu}^{x}} Z_{\nu,t} e^{-i J M_{\nu}^{x}}$ we can thus conclude 
\begin{align}
    [V_{L}, e^{i J M_{\nu}^{x}} Z_{\nu,t} e^{-i J M_{\nu}^{x}}] =0, \nonumber \\
  \implies  [V_{L}, e^{i J (X_{\nu,t}+X^{\dagger}_{\nu,t})} Z_{\nu,t} e^{-i J (X_{\nu,t}+X^{\dagger}_{\nu,t})}] =0. \label{eqn:vlprop2d}
\end{align}

Next, consider the relation $e^{- i J X}$. It can be evaluated to 
\begin{equation}
    e^{- i J X}= c_{0} \mathbb{1} + c_{1} X +c_{2}X^{\dagger},
\end{equation}
where $c_{i} = \sum_{m=0}^{\infty} \frac{(-i J)^{3m+i}}{(3 m +i)!}$. 
So we have 
\begin{align}\label{eqn:expx3}
    e^{- i J (X+ X^{\dagger})}&= (c_{0} \mathbb{1} + c_{1} X +c_{2}X^{\dagger} )(c_{0} \mathbb{1} + c_{1} X^{\dagger} +c_{2}X ), \nonumber \\
    &= (c_{0}^{2} + c_{1}^{2}+ c_{2}^{2}) \mathbb{I} + (c_{0}c_{1}+c_{1}c_{2}+c_{2}c_{3}) (X+ X^{\dagger}), \nonumber \\
    & = \tilde{C}_{0} \mathbb{I} + \tilde{C}_{1} (X+ X^{\dagger}),
\end{align}
where $\tilde{C}_{0}= \frac{1}{3} \left(2 e^{i J}+e^{-2 i J}\right)$ and $\tilde{C}_{1}= \frac{1}{3} e^{-2 i J} \left(1-e^{3 i J}\right)$ with $J = \frac{2 \pi}{9}.$

So we have 
\begin{align}
    e^{i J (X+ X^{\dagger})} Z e^{- i J (X+ X^{\dagger})} &= (\tilde{C}^{*}_{0} \mathbb{I} + \tilde{C}^{*}_{1} (X+ X^{\dagger}))Z (\tilde{C}_{0} \mathbb{I} + \tilde{C}_{1} (X+ X^{\dagger})), \nonumber \\
    &=  \omega^{2} ZX.
\end{align}

Combining with Eq. \eqref{eqn:vlprop2d} we obtain
\begin{equation}
[V_{L}, Z_{\nu,t}X_{\nu,t}] = 0. 
\end{equation}
Similarly repeating the calculation with $Z_{\nu,t}^{\dagger}$, we obtain 
\begin{equation}
    [V_{L}, Z^{\dagger}_{\nu,t}X_{\nu,t}] = 0. 
\end{equation}
To obtain $[V_{L},X^{\dagger}]$, we notice that 
\begin{align}
 [V_{L},Z_{\nu,t} X_{\nu,t}]= [V_{L},Z_{\nu,t}] X_{\nu,t} + Z_{\nu,t}^{\dagger} [V_{L}X_{\nu,t}] = 0, \nonumber \\
 \implies [V_{L}, X_{\nu,t}]=0.
\end{align}
 We thus obtain $[V_{L}, X^{\dagger}_{\nu,t}]=0$.
 
 This small exercise allows use to show that 
\begin{equation}\label{eqn:vlprop3}
    [V_{L}, X_{\nu,t}^{a}Z_{\nu,t}^{b}]=0 ; \quad a,b \in \{0,1,2\},
\end{equation}
and this is important as $X^{a}Z^{b} \quad \forall\, a,b \in \{0,1,2\}$ generates the algebra. 

We now proceed to use proof by induction to prove that if 
\begin{align}
  [V_{L}, X_{\nu,\mu}^{a}Z_{\nu,\mu}^{b}]=0 ; \quad a,b \in \{0,1,2\}   \nonumber \\
  \mu \in \{\bar{\mu}+1, \cdots, t\}
\end{align}
then 
\begin{equation}
    [V_{L}, X_{\nu,\bar{\mu}}^{a}Z_{\nu,\bar{\mu}}^{b}]= 0 \quad \forall\, a,b \in \{0,1,2\}
\end{equation}
The base of the induction is given by Eq. \eqref{eqn:vlprop3}. We thus only need to prove the inductive step.

First we note that $V_{L}$ commutes with 
\begin{equation}
    U_{\nu, \alpha,\beta} e^{-i h M_{\nu}^{Z}} e^{-i J M_{\nu}^{X}} X_{\nu,\mu}e^{ i J M_{\nu}^{X}} e^{-i h M_{\nu}^{Z}}  U^{\dagger}_{\nu, \alpha,\beta}= U_{\nu, \alpha,\beta} e^{-i h M_{\nu}^{Z}}  X_{\nu,\mu} e^{-i h M_{\nu}^{Z}}  U^{\dagger}_{\nu, \alpha,\beta},
\end{equation}
$\mu \in \{\bar{\mu}+1, \cdots, t\}$.
We now note that 
\begin{equation}
    e^{- i h M_{\nu}^{Z}} X_{\nu,\mu} e^{i h M_{\nu}^{Z}} = \mathcal{C}_0 X_{\nu,\mu}+ \mathcal{C}_1 X_{\nu,\mu}Z_{\nu,\mu} +\mathcal{C}_2X_{\nu,\mu}Z^{\dagger}_{\nu,\mu},
\end{equation}
for some non-zero constants $\mathcal{C}_{i}; i = 0,1,2$. Thus we have 
\begin{equation}
    U_{\nu, \alpha,\beta} e^{-i h M_{\nu}^{Z}} e^{-i J M_{\nu}^{X}} X_{\nu,\mu}e^{ i J M_{\nu}^{X}} e^{-i h M_{\nu}^{Z}}  U^{\dagger}_{\nu, \alpha,\beta}= U_{\nu, \alpha,\beta} (\mathcal{C}_0 X_{\nu,\mu}+ \mathcal{C}_1 X_{\nu,\mu}Z_{\nu,\mu} +\mathcal{C}_2 X_{\nu,\mu}Z^{\dagger}_{\nu,\mu}) e^{-i h M_{\nu}^{Z}}  U^{\dagger}_{\nu, \alpha,\beta}.
\end{equation}
We now note that $V_{L}$ commutes with the left hand side and thus 
\begin{align}
   V_{L} U_{\nu, \alpha,\beta} e^{-i h M_{\nu}^{Z}} e^{-i J M_{\nu}^{X}} X_{\nu,\mu}e^{ i J M_{\nu}^{X}} e^{-i h M_{\nu}^{Z}}  U^{\dagger}_{\nu, \alpha,\beta} &= U_{\nu, \alpha,\beta} e^{-i h M_{\nu}^{Z}} e^{-i J M_{\nu}^{X}} X_{\nu,\mu}e^{ i J M_{\nu}^{X}} e^{-i h M_{\nu}^{Z}}  U^{\dagger}_{\nu, \alpha,\beta} V_{L}, \nonumber \\
   V_{L} U_{\nu, \alpha,\beta} (\mathcal{C}_0 X_{\nu,\mu}+ \mathcal{C}_1 X_{\nu,\mu}Z_{\nu,\mu} +\mathcal{C}_2 X_{\nu,\mu}Z^{\dagger}_{\nu,\mu}) e^{-i h M_{\nu}^{Z}}  U^{\dagger}_{\nu, \alpha,\beta} &=  U_{\nu, \alpha,\beta} (\mathcal{C}_0 X_{\nu,\mu}+ \mathcal{C}_1 X_{\nu,\mu}Z_{\nu,\mu} +\mathcal{C}_2 X_{\nu,\mu}Z^{\dagger}_{\nu,\mu}) e^{-i h M_{\nu}^{Z}}  U^{\dagger}_{\nu, \alpha,\beta} V_{L}. 
\end{align}
Thus we have $[V_{L}, U_{\nu,\alpha,\beta} X_{\nu,\mu}U_{\nu,\alpha,\beta}]=0$. Now, focusing only on the term $U_{\nu,\alpha,\beta} X_{\nu,\mu}U_{\nu,\alpha,\beta}$ we have 
\begin{align}
    U_{\nu,\alpha,\beta} X_{\nu,\mu}U_{\nu,\alpha,\beta}= e^{-i J (Z_{\nu,\mu-1}Z^{\dagger}_{\nu,\mu} + Z^{\dagger}_{\nu,\mu-1}Z_{\nu,\mu} +Z_{\nu,\mu}Z^{\dagger}_{\nu,\mu+1}+Z^{\dagger}_{\nu,\mu}Z_{\nu,\mu+1})} X_{\nu,\mu} e^{i J (Z_{\nu,\mu-1}Z^{\dagger}_{\nu,\mu} + Z^{\dagger}_{\nu,\mu-1}Z_{\nu,\mu} +Z_{\nu,\mu}Z^{\dagger}_{\nu,\mu+1}+Z^{\dagger}_{\nu,\mu}Z_{\nu,\mu+1})}
\end{align}
We now note that 
\begin{align}
   e^{-i J (Z_{\nu,\mu-1}Z^{\dagger}_{\nu,\mu} + Z^{\dagger}_{\nu,\mu-1}Z_{\nu,\mu})} X_{\nu,\mu} e^{i J (Z_{\nu,\mu-1}Z^{\dagger}_{\nu,\mu} + Z^{\dagger}_{\nu,\mu-1}Z_{\nu,\mu})}  \propto Z^{\dagger}_{\mu-1}X_{\mu}, \\
   e^{-i J (Z_{\nu,\mu}Z^{\dagger}_{\nu,\mu+1} + Z^{\dagger}_{\nu,\mu}Z_{\nu,\mu+1})} X_{\nu,\mu} e^{i J (Z_{\nu,\mu}Z^{\dagger}_{\nu,\mu+1} + Z^{\dagger}_{\nu,\mu}Z_{\nu,\mu+1})}  \propto X_{\mu} Z^{\dagger}_{\mu+1},
\end{align}
which combined with $[V_{L}, U_{\nu,\alpha,\beta} X_{\nu,\mu}U_{\nu,\alpha,\beta}]=0$ requires 
\begin{equation}
    [V_{L}, Z_{\nu,\mu-1}]=0.
\end{equation}
Note that if we now substitute $\mu= \bar{\mu}+1$ then it requires that 
\begin{equation}
    [V_{L}, Z_{\nu,\bar{\mu}}]=0,
\end{equation}
which proves the induction. 

Following all previous reasoning once again we can now show that 
\begin{align}
  [V_{L}, X^{a}_{\nu,\mu}Z^{b}_{\nu,\mu}]=0  \quad a, b \in \{0,1,2\}, \nonumber \\
  \mu \in\{1, \cdots,t\}, \nu \in \{1,\cdots,n\}.
\end{align}
This implies that $V_{L}$ commutes with the entire algebra, $\mathfrak{s l}(3, \mathbb{C})$. Since the algebra is irreducible, Schur's lemma implies that $V_{L}$ is trivial, upto a multiplicative factor. Thus we 
\begin{equation}
    V_{L} = \mathbb{1}\quad \text{and} \quad \phi_{\nu}=0.
\end{equation}
Thus we can write 
\begin{equation}
    \bra{\mathbb{1}}= \frac{1}{3^{nt/2}} \sum_{\{s_{\nu,t}\}} \bra{\{s_{\nu,t} \}} \otimes \bra{\{s_{\nu,t} \}},
\end{equation}
where $\bra{s_{\nu,t}}$ are the computational basis states. 

\section{Spectrum at  finite $N$}\label{append:trfiniten}

We first note that we have following properties \cite{Bertini:2018fbz}
\begin{align}
    \prod_{\nu=1}^{n} O_{\nu} \otimes O_{\nu}^{*} \ket{\Psi}= \ket{\Psi}, \label{eqn:oprop1}\\
   \bra{\Psi} \prod_{\nu=1}^{n} O_{\nu} \otimes O_{\nu}^{*}= \bra{\Psi},  \label{eqn:oprop2}
\end{align}
where $O_{\nu}$ is a unitary operator that acts non-trivially only on the $\nu$-th copy of $\mathcal{H}_{t}$  in $\mathcal{H}_{nt}$ and $P \ket{\mathbb{1}}= \ket{\Psi}$.
\begin{claim}
    For state corresponding to values given by Eq. \eqref{eqn:solvableval} we have
\begin{align}
\langle\Psi|\left(\prod_{j=1}^N \mathbb{T}\left[h_j\right]\right)|\Psi\rangle =\langle\Psi| \prod_{\nu=1}^n\left[\prod_{\tau=0}^{\lfloor N / 2\rfloor-1}\left[\mathcal{P}_{\nu, t-\tau}^{(Z)} \mathcal{C}_{\nu, t-\tau}^{(X)}\right]\left[\mathcal{P}_{\nu, t-\lfloor N / 2\rfloor}^Z\right]^{\mod(N,2)}\right] \ket{\Psi},
\end{align}
$\mathbb{T} \equiv \mathbb{T}_{\theta,\chi,\alpha,\beta}$ evaluated at values given by Eq. \eqref{eqn:solvableval}.
\end{claim}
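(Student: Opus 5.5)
We follow the strategy of \cite{Bertini:2018fbz} for the qubit kicked Ising chain: strip off the transfer matrices two at a time from both ends of $\prod_{j=1}^N\mathbb{T}[h_j]$, using the invariance properties \eqref{eqn:oprop1}--\eqref{eqn:oprop2} of $\ket{\Psi}$ and $\bra{\Psi}$. At the values \eqref{eqn:solvableval} we have $\mathbb{H}_{\nu,1}=\mathbb{1}\otimes\mathbb{1}$, so each copy reduces to $\mathbb{T}^{(\nu)}[h]=\mathcal{P}^{(Z)}_{\nu,t}\,\mathbb{U}^{(\nu)}_{\alpha,\beta}[h]$. Here $\mathbb{U}^{(\nu)}$ has the form $O_\nu\otimes O_\nu^{*}$ with $O_\nu=U_{\nu,\alpha,\beta}e^{-ihM^z_\nu}e^{-iJM^x_\nu}$ unitary at the dual-unitary point. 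Since $\mathbb{P}$ only permutes the backward copies, $\ket{\Psi}=\mathbb{P}\ket{\mathbb{1}}$ is still invariant under $\prod_\nu O_\nu\otimes O_\nu^*$. Hence the first factor $\mathbb{U}$ adjacent to $\ket{\Psi}$ can be absorbed into $\ket{\Psi}$. Likewise, any unitary of the form $\prod_\nu O_\nu\otimes O_\nu^*$ that we manage to move to the far left is absorbed by $\bra{\Psi}$.

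The key step is a commutation identity. After absorbing the rightmost unitary, we are left with a string $\mathcal{P}\,\mathbb{U}\,\mathcal{P}\,\mathbb{U}\cdots\mathcal{P}$. We move each $\mathbb{U}^{(\nu)}[h_j]$ leftward by conjugation, $\mathbb{U}\,\mathcal{P}\,\mathbb{U}^\dagger$. As in Appendix~\ref{sec:tevec}, conjugating $Z_{\nu,t}$ by $e^{\mp iJ(X+X^\dagger)}$ at $J=2\pi/9$ gives $\omega^2 ZX$. The part $U_{\nu,\alpha,\beta}$ together with $e^{-ihM^z}$ commutes with $Z_{\nu,t}$ except for the nearest-neighbour $ZZ^\dagger$ coupling between times $t$ and $t-1$. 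Conjugating the projector $\mathcal{P}^{(Z)}_{\nu,t}$ (which enforces $Z\otimes Z^{-1}=1$) by one full $O_\nu\otimes O^*_\nu$ should therefore yield a product of the form $\mathcal{C}^{(X)}_{\nu,t}\,\mathcal{P}^{(Z)}_{\nu,t-1}$, up to absorbed unitaries. Here the constraint $X\otimes X=1$ appears at time $t$, and the $ZZ^\dagger$ coupling transfers the $Z$-type constraint to time $t-1$. We would verify this explicitly on the generators $Z\otimes Z^{-1}$ and $Z^2\otimes Z^{-2}$, using the Clifford-like action of the dual-unitary gate on generalized Paulis; the field term $h$ drops out because the relevant operators in each copy appear paired as $O\otimes O^*$.

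Iterating this, the $k$-th pair of transfer matrices from the outside contributes $\mathcal{P}^{(Z)}_{\nu,t-k}\mathcal{C}^{(X)}_{\nu,t-k}$. All leftover unitaries are then pushed to $\bra{\Psi}$ and absorbed there. When $N$ is odd, a single unpaired projector $\mathcal{P}^{(Z)}_{\nu,t-\lfloor N/2\rfloor}$ remains, giving the factor with exponent $\mathrm{mod}(N,2)$. For $\lfloor N/2\rfloor\ge t$ the time index runs out; the projectors at $\tau\le 0$ are then to be interpreted as trivial or as acting on the initial-state layer, which we would check separately. We expect the main obstacle to be the conjugation identity itself. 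The difficulty is the bookkeeping of the phases $\alpha,\beta$ at $\tau=1$, and making sure the transported projectors land exactly on $\mathcal{P}^{(Z)}_{\nu,t-k}\mathcal{C}^{(X)}_{\nu,t-k}$ rather than on dressed versions. We would first check this for $t=2$ and $N=2,3$ by direct computation before doing the induction on $N$.
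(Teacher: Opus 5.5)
\paragraph{There is a genuine gap.} Your overall frame matches the paper's: absorb anything of the form $\prod_\nu O_\nu\otimes O_\nu^*$ into $\langle\Psi|$ or $|\Psi\rangle$, reduce to local identities, and induct on $N$. But the step you call key is false as stated, and the identities that actually do the work are missing.

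Conjugation preserves rank. Restricted to one replica $\nu$, $\mathbb{U}\mathcal{P}^{(Z)}_{\nu,t}\mathbb{U}^\dagger$ is a projector whose rank is one third of the doubled space. By contrast, $\mathcal{C}^{(X)}_{\nu,t}\mathcal{P}^{(Z)}_{\nu,t-1}$ has rank one ninth. So no dressing by absorbable unitaries can relate them: a single constraint cannot become two. Taken literally, your rule already fails at $N=2$. It gives $\langle\Psi|\prod_\nu\mathcal{P}^{(Z)}_{\nu,t}\mathcal{C}^{(X)}_{\nu,t}\mathcal{P}^{(Z)}_{\nu,t-1}|\Psi\rangle=3^{3(1-n)}$, whereas the claim gives $\langle\Psi|\prod_\nu\mathcal{P}^{(Z)}_{\nu,t}\mathcal{C}^{(X)}_{\nu,t}|\Psi\rangle=3^{2(1-n)}$.

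What is true, for $t\ge 2$, is $\mathcal{P}^{(Z)}_{t}\,\mathbb{U}\mathcal{P}^{(Z)}_{t}\mathbb{U}^\dagger=\mathbb{A}_{t-1}\mathcal{P}^{(Z)}_{t}\mathcal{C}^{(X)}_{t}\mathbb{A}_{t-1}^\dagger$. The $X$-constraint at time $t$ needs two adjacent time-$t$ projectors, and the $Z$-constraint at $t-1$ appears only once a third one is present. This is why the answer is organised by $\lfloor N/2\rfloor$. It is also why the paper tracks the light-cone operators $\mathbb{B}_{\nu,j}$ and proves Lemma 1 by induction, rather than conjugating one projector at a time.

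\paragraph{The identities that replace your conjugation rule.} The first is $\mathcal{P}^{(Z)}_{\nu,\tau}\mathbb{X}_{\nu,\tau}\mathcal{P}^{(Z)}_{\nu,\tau}=\mathcal{P}^{(Z)}_{\nu,\tau}\mathcal{C}^{(X)}_{\nu,\tau}$. This uses only that $e^{-iJ(X+X^\dagger)}$ is complex Hadamard, $|\langle m|e^{-iJ(X+X^\dagger)}|n\rangle|^2=1/3$.

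The second is the transfer identity $\mathcal{P}^{(Z)}_{\nu,\tau}\mathcal{C}^{(X)}_{\nu,\tau}\mathbb{A}_{\nu,\tau-1}\mathcal{P}^{(Z)}_{\nu,\tau}=\mathcal{P}^{(Z)}_{\nu,\tau}\mathcal{C}^{(X)}_{\nu,\tau}\mathcal{P}^{(Z)}_{\nu,\tau-1}$. This is where unitarity of the $ZZ^\dagger$ coupling in the time direction enters. With $g(0)=e^{-2iJ}$ and $g(\pm1)=e^{iJ}$, one needs $\frac13\sum_m g(a-m)\,g(b-m)^*=\delta_{ab}$, which holds at $J=2\pi/9$.

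Your treatment of the field is also not right, and the ``Clifford-like'' argument does not apply here. $e^{-ih(Z+Z^\dagger)}Xe^{ih(Z+Z^\dagger)}$ is $X$ times a non-Pauli diagonal unitary. Moreover, $\mathbb{Z}_{\nu,\tau}$ commutes neither with $\mathbb{X}_{\nu,\tau}$ nor with $\mathcal{C}^{(X)}_{\nu,\tau}$, so the $O\otimes O^*$ pairing by itself removes nothing. The field, and likewise the phases $\alpha,\beta$ at $\tau=1$, drops out only against a $Z$-projector on the same site: $\mathcal{P}^{(Z)}_{\nu,\tau}\mathbb{Z}_{\nu,\tau}=\mathcal{P}^{(Z)}_{\nu,\tau}$. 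This holds because a diagonal unitary $D\otimes D^*$ acts trivially on $\mathrm{span}\{|mm\rangle\}$.

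The hand check you planned for $t=2$, $N=2,3$ would have exposed this. As written, though, the argument has no valid mechanism for producing the factors $\mathcal{P}^{(Z)}_{\nu,t-k}\mathcal{C}^{(X)}_{\nu,t-k}$.
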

\emph{Proof}- We first recall the following definitions,
\begin{align}
     \mathcal{P}^{(Z)}_{\nu,\tau}&=  \frac{\mathbb{1} + Z_{\nu,\tau} \otimes Z_{\nu,\tau}^{-1} + Z_{\nu,\tau}^{2} \otimes Z_{\nu,\tau}^{-2}}{3}, \\
      \mathcal{C}_{\nu,\tau}^{(X)} &= \frac{1}{3}( \mathbb{1}+ X_{\nu,\tau}\otimes X_{\nu,\tau}+ X_{\nu,\tau}^{\dagger}\otimes X_{\nu,\tau}^{\dagger}).
\end{align}

Defining the following 
\begin{align}
    \mathbb{A}_{\nu,\tau}= e^{- i J (Z_{\nu,\tau}Z^{\dagger}_{\nu,\tau+} + Z_{\nu,\tau}^{\dagger}Z_{\nu,\tau+1})} \otimes e^{ i J (Z_{\nu,\tau}Z^{\dagger}_{\nu,\tau+} + Z_{\nu,\tau}^{\dagger}Z_{\nu,\tau+1})},
\end{align}
\begin{align}
    \mathbb{Z}_{\nu,\tau}= e^{- i h (Z_{\nu,\tau} + Z_{\nu,\tau}^{\dagger})} \otimes e^{i h(Z_{\nu,\tau} + Z_{\nu,\tau}^{\dagger})},
\end{align}

\begin{align}
    \mathbb{X}_{\nu,\tau}= e^{- i J (X_{\nu,\tau} + X_{\nu,\tau}^{\dagger})} \otimes e^{i J (X_{\nu,\tau} + X_{\nu,\tau}^{\dagger})},
\end{align}

\begin{align}
    \mathcal{U}^{\alpha,\beta}_{\nu,1}= \exp[-i \alpha P^{(1)}_{\nu,1}-i \beta P^{(2)}_{\nu,1}] \otimes \exp[i \alpha P^{(1)}_{\nu,1}+i \beta P^{(2)}_{\nu,1}].
\end{align}
We thus have 

\begin{align}
\langle\Psi|\left(\prod_{j=1}^N \mathbb{T}\left[h_j\right]\right)|\Psi\rangle = \braket{ \Psi|\prod_{\nu=1}^{n}\left[ \prod_{j=1}^{N} \mathcal{P}_{\nu,t}^{(Z)} \mathcal{U}_{\nu,1} \prod_{\tau=1}^{t-1}\mathbb{A}_{\nu,\tau} \prod_{\tau=1}^{t} \mathbb{Z}^{h_{j}}_{\nu,\tau} \prod_{\tau=1}^{t} \mathbb{X}_{\nu,\tau} \right]  |\Psi}.
\end{align}

Next, using Eq. \eqref{eqn:oprop1} and \eqref{eqn:oprop2} we have the following
\begin{align}
\bra{\Psi} \prod_{\nu=1}^{n}\mathbb{A}_{\nu,\tau} &= \bra{\Psi}, \\
\bra{\Psi} \prod_{\nu=1}^{n}\mathbb{Z}_{\nu,\tau} &= \bra{\Psi}, \\
\bra{\Psi} \prod_{\nu=1}^{n}\mathbb{X}_{\nu,\tau} &= \bra{\Psi}, \\
\ket{\Psi} \prod_{\nu=1}^{n}\mathbb{A}_{\nu,\tau} &= \ket{\Psi}, \\
\ket{\Psi} \prod_{\nu=1}^{n}\mathbb{Z}_{\nu,\tau} &= \ket{\Psi}, \\
\ket{\Psi} \prod_{\nu=1}^{n}\mathbb{X}_{\nu,\tau} &= \ket{\Psi}.
\end{align}
We also have the following commutation relation
\begin{align}
    \mathbb{A}_{\nu,\tau}\mathbb{Z}^{h}_{\nu,\tau'}&= \mathbb{Z}^{h}_{\nu,\tau'}\mathbb{A}_{\nu,\tau} \, \forall \, \tau,\tau', \\ 
    \mathbb{A}_{\nu,\tau}\mathbb{X}^{h}_{\nu,\tau'}&= \mathbb{X}^{h}_{\nu,\tau'}\mathbb{A}_{\nu,\tau} \, \, \tau'\neq \tau, \tau+1, \\ 
    \mathbb{X}_{\nu,\tau}\mathbb{Z}^{h}_{\nu,\tau'}&= \mathbb{Z}^{h}_{\nu,\tau'}\mathbb{X}_{\nu,\tau} \, \, \tau'\neq \tau, \\ 
    \mathbb{A}_{\nu,\tau}\mathcal{P}^{(Z)}_{\nu,\tau'}&= \mathcal{P}^{(Z)}_{\nu,\tau'}\mathbb{A}_{\nu,\tau} \, \forall \, \tau,\tau', \\ 
    \mathbb{Z}_{\nu,\tau}\mathcal{P}^{(Z)}_{\nu,\tau'}&= \mathcal{P}^{(Z)}_{\nu,\tau'}\mathbb{Z}_{\nu,\tau} \, \forall \, \tau,\tau' ,\\ 
    \mathbb{X}_{\nu,\tau}\mathcal{P}^{(Z)}_{\nu,\tau'}&= \mathcal{P}^{(Z)}_{\nu,\tau'}\mathbb{X}_{\nu,\tau} \, \forall \, \tau,\tau .
\end{align}

Making use of these properties we now obtain 
\begin{align}
 \langle\Psi|\left(\prod_{j=1}^N \mathbb{T}\left[h_j\right]\right)|\Psi\rangle  =  \braket{\Psi|\prod_{\nu=1}^{n} \left[ \prod_{j=1}^{N-1}  \mathbb{B}_{\nu,j}\right]|\Psi },
\end{align}

where 
\begin{align}\label{eqn:bdef1}
    \mathbb{B}_{\nu,j}= \mathcal{P}^{(Z)}_{\nu,t} \prod_{\tau=t-j+1}^{t-1} \mathbb{A}_{\nu,\tau} \prod_{\tau=t-j+1}^{t} \mathbb{\tilde{Z}}^{h_{j},\alpha,\beta}_{\nu,\tau} \prod_{\tau=t-j+2}^{t} \mathbb{X}_{\nu,\tau} \mathcal{P}^{(Z)}_{\nu,t}, \\
    \mathbb{\tilde{Z}}^{h_{j},\alpha_{j},\beta_{j}}_{\nu,\tau} = \mathbb{Z}^{h_{j}}_{\nu,\tau} ((1-\delta_{\tau,1})\mathbb{1} + \delta_{\tau,1} \mathcal{U}^{\alpha,\beta}_{\nu,1}).
\end{align}

We also note that 
\begin{align}
    \mathcal{P}^{(Z)}_{\nu,\tau} \mathbb{X}_{\nu,\tau}\mathcal{P}^{(Z)}_{\nu,\tau}&= \mathcal{P}^{(Z)}_{\nu,\tau} \left( \mathcal{C}^{X}_{\nu,\tau} + \frac{\omega}{3} ( \mathbb{1}\otimes X_{\nu,\tau} +\mathbb{1}\otimes X^{\dagger}_{\nu,\tau} + \frac{\omega^{2}}{3} ( X_{\nu,\tau}\otimes \mathbb{1} +X^{\dagger}_{\nu,\tau} \otimes \mathbb{1}) + \frac{1}{3}( X_{\nu,\tau} \otimes X^{\dagger}_{\nu,\tau} + X^{\dagger}_{\nu,\tau} \otimes X_{\nu,\tau})\right)\mathcal{P}^{(Z)}_{\nu,\tau} \nonumber\\
    &= \mathcal{P}^{(Z)}_{\nu,\tau} \mathcal{C}^{X}_{\nu,\tau}, \label{eqn:prop2}
 \\
 \mathcal{C}_{\nu,\tau}^{(X)}&= \frac{1}{3}( \mathbb{1}+ X_{\nu,\tau}\otimes X_{\nu,\tau}+ X_{\nu,\tau}^{\dagger}\otimes X_{\nu,\tau}^{\dagger}).  
\end{align}
where we use Eq. \eqref{eqn:expx3} and following properties
\begin{align}
 \left(\frac{\mathbb{1} + \omega Z_{\nu,\tau} \otimes Z^{\dagger}_{\nu,\tau} + \omega^{2} Z_{\nu,\tau}^{\dagger} \otimes Z_{\nu,\tau}}{3}\right)\mathcal{C}_{\nu,t}^{(X)} = 0, \\
 \left(\frac{\mathbb{1} + \omega^{2} Z_{\nu,\tau} \otimes Z^{\dagger}_{\nu,\tau} + \omega Z^{\dagger}_{\nu,\tau} \otimes Z_{\nu,\tau}}{3}\right) \mathcal{C}_{\nu,t}^{(X)} =0, \\
    \mathcal{C}_{\nu,\tau}^{X} \mathcal{P}_{\nu,\tau'}^{(Z)} = \mathcal{P}_{\nu,\tau'}^{(Z)}\mathcal{C}_{\nu,\tau}^{(X)}\quad \forall \quad \tau, \tau'.
\end{align}

So we can write Eq. \eqref{eqn:bdef1} as 
\begin{align}
    \mathbb{B}_{\nu,j}= \mathcal{P}^{(Z)}_{\nu,t}\mathbb{A}_{\nu,t-1}  \mathcal{C}_{\nu,t}^{(X)}\prod_{\tau=t-j+1}^{t-2} \mathbb{A}_{\nu,\tau} \prod_{\tau=t-j+2}^{t-1} \mathbb{\tilde{Z}}^{h_{j},\alpha,\beta}_{\nu,\tau} \prod_{\tau=t-j+1}^{t-1} \mathbb{X}_{\nu,\tau}.
\end{align}

We then have following lemma

\begin{lemma}\label{eqn:blemma}
    \begin{align}
    \mathbb{B}_{\nu,1} \cdots \mathbb{B}_{\nu,2n} &= \mathbb{A}_{\nu,t-1} \prod_{j=0}^{n-1}[\mathcal{P}^{(Z)}_{\nu,t-j}\mathcal{C}^{(X)}_{\nu,t-j}] \mathcal{P}^{(Z)}_{\nu,t-n} \mathbb{X}_{\nu,t-n} \prod^{2n-2}_{j=n} \left[ \prod^{t-2n+j}_{\tau=t-1-j} \mathbb{A}_{\nu,\tau} \prod_{\tau=t-j}^{t-2n+j} \mathbb{\tilde{Z}}^{h_{j},\alpha,\beta}_{\nu,\tau} \prod_{\tau=t-2n+j+1}^{t-1-j} \mathbb{X}_{\nu,\tau}  \right],\label{eqn:blemmaa} \\
    \mathbb{B}_{\nu,1} \cdots \mathbb{B}_{\nu,2n+1} &= \mathbb{A}_{\nu,t-1} \prod_{j=0}^{n}[\mathcal{P}^{(Z)}_{\nu,t-j}\mathcal{C}^{(X)}_{\nu,t-j}] \mathcal{P}^{(Z)}_{\nu,t-n} \mathbb{X}_{\nu,t-n} \prod^{2n-1}_{j=n} \left[ \prod^{t-2n-1+j}_{\tau=t-1-j} \mathbb{A}_{\nu,\tau} \prod_{\tau=t-j}^{t-2n+j-1} \mathbb{\tilde{Z}}^{h_{j},\alpha,\beta}_{\nu,\tau} \prod_{\tau=t-2n+j}^{t-1-j} \mathbb{X}_{\nu,\tau}  \right].\label{eqn:blemmab}
    \end{align}
\end{lemma}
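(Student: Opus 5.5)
The plan is to argue by induction on the number of factors, proving \eqref{eqn:blemmaa} and \eqref{eqn:blemmab} together. Passing from $2n$ to $2n+1$ factors uses the odd-step mechanism, and passing from $2n+1$ to $2n+2$ uses the even-step mechanism. The base case is the rewritten form of $\mathbb{B}_{\nu,1}$ (and $\mathbb{B}_{\nu,1}\mathbb{B}_{\nu,2}$) obtained just above the lemma by using Eq.~\eqref{eqn:prop2} to absorb $\mathcal{P}^{(Z)}_{\nu,t}\mathbb{X}_{\nu,t}\mathcal{P}^{(Z)}_{\nu,t}$ into $\mathcal{P}^{(Z)}_{\nu,t}\mathcal{C}^{(X)}_{\nu,t}$. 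I would check the base case directly by writing out the ranges of $\tau$ for $j=1,2$; most products are empty there. The identity is understood inside $\bra{\Psi}\cdots\ket{\Psi}$ with the product over $\nu$. This means I may freely delete any $\prod_\nu\mathbb{A}$, $\prod_\nu\mathbb{Z}$ or $\prod_\nu\mathbb{X}$ factor that can be commuted to the far left or far right, by Eqs.~\eqref{eqn:oprop1}--\eqref{eqn:oprop2}.

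For the inductive step, multiply the assumed expression on the right by the next $\mathbb{B}_{\nu,j+1}$. It begins with $\mathcal{P}^{(Z)}_{\nu,t}\mathbb{A}_{\nu,t-1}\mathcal{C}^{(X)}_{\nu,t}$. I first commute $\mathcal{P}^{(Z)}_{\nu,t}$ and $\mathcal{C}^{(X)}_{\nu,t}$ leftwards through the tail bracket, using the commutation list: $\mathcal{P}^{(Z)}$ commutes with every $\mathbb{A},\mathbb{Z},\mathbb{X}$, and $\mathcal{C}^{(X)}$ commutes with all $\mathcal{P}^{(Z)}$. Both then merge with the existing $\mathcal{P}^{(Z)}_{\nu,t}\mathcal{C}^{(X)}_{\nu,t}$ factor, since both are projectors ($\mathcal{C}^{(X)}$ is idempotent on the image of $\mathcal{P}^{(Z)}$). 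The new staircase of $\mathbb{A}_{\nu,\tau}$, $\tilde{\mathbb{Z}}_{\nu,\tau}$ and $\mathbb{X}_{\nu,\tau}$ factors then has to be interleaved with the old tail bracket. Factors acting on disjoint sites commute and slide past each other. The only nontrivial meetings are an $\mathbb{X}_{\nu,\tau}$ reaching the frontier site $\tau=t-n$ (or $t-n-1$), where the existing $\mathcal{P}^{(Z)}_{\nu,t-n}$ sits. There the pattern $\mathcal{P}^{(Z)}\mathbb{X}\,\mathcal{P}^{(Z)}$ appears; the second $\mathcal{P}^{(Z)}$ arises because the dual-unitary gate $\mathbb{A}_{\nu,\tau}\mathbb{X}_{\nu,\tau}$ maps $\mathcal{P}^{(Z)}_{\nu,\tau+1}$ into $\mathcal{P}^{(Z)}_{\nu,\tau}$, which is the space-time dual of the relation used in Appendix~\ref{sec:tevec}. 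Eq.~\eqref{eqn:prop2} then replaces this pattern by $\mathcal{P}^{(Z)}_{\nu,t-n}\mathcal{C}^{(X)}_{\nu,t-n}$, moving the frontier one site down and producing the next $\mathcal{P}^{(Z)}\mathbb{X}$ pair. Any leftover $\mathbb{A}$ at the outermost sites is pushed to the boundary and removed using $\bra{\Psi}$ invariance, which is why only the single $\mathbb{A}_{\nu,t-1}$ survives in front.

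The main obstacle is the bookkeeping of the index ranges in the staircase brackets, together with the precise propagation identity $\mathcal{P}^{(Z)}_{\nu,\tau+1}\mathbb{A}_{\nu,\tau}\mathbb{X}_{\nu,\tau}=\cdots\mathcal{P}^{(Z)}_{\nu,\tau}$. I would first establish this identity as a one-site lemma at $J=2\pi/9$, using the explicit form \eqref{eqn:expx3} and $e^{iJ(X+X^\dagger)}Ze^{-iJ(X+X^\dagger)}=\omega^2ZX$, together with $\mathbb{A}$ conjugating $Z$-strings into $Z$-strings. Once that holds, I would verify the index shifts $j\to j+1$ of the ranges $[t-1-j,\,t-2n+j]$ and so on explicitly for the small cases $2n=2,3,4$, and then check that the general ranges match.
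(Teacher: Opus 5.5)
Your overall architecture matches the paper's: an induction alternating between $2n$ and $2n+1$ factors, a frontier that descends one site every two steps, \eqref{eqn:prop2} turning $\mathcal{P}^{(Z)}\mathbb{X}\mathcal{P}^{(Z)}$ into $\mathcal{P}^{(Z)}\mathcal{C}^{(X)}$ at the frontier, and field factors dropping out on the diagonal. The gap is in the step that creates the new $\mathcal{P}^{(Z)}$ and moves the frontier.

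Your one-site lemma $\mathcal{P}^{(Z)}_{\nu,\tau+1}\mathbb{A}_{\nu,\tau}\mathbb{X}_{\nu,\tau}=\cdots\mathcal{P}^{(Z)}_{\nu,\tau}$ cannot hold. $\mathbb{A}_{\nu,\tau}$ is diagonal and $\mathbb{X}_{\nu,\tau}$ acts only on site $\tau$, so $\mathcal{P}^{(Z)}_{\nu,\tau+1}$ commutes with the unitary $\mathbb{A}_{\nu,\tau}\mathbb{X}_{\nu,\tau}$. Hence the left side acts injectively on the nonzero range of $\mathcal{P}^{(Z)}_{\nu,\tau+1}(\mathbb{1}-\mathcal{P}^{(Z)}_{\nu,\tau})$, where anything ending in $\mathcal{P}^{(Z)}_{\nu,\tau}$ vanishes. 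Concretely, fix site $\tau+1$ in a diagonal configuration $\ket{r}\otimes\ket{r}$. Then the Ising gate acts on $\ket{p}\otimes\ket{q}$ of site $\tau$ as the pure phase $e^{-iJ[f(p-r)-f(q-r)]}$, with $f(x)=2\cos(2\pi x/3)$. It collapses to $\delta_{pq}$ only after averaging over $r$: for $p\neq q$ one has $\frac{1}{3}\sum_{r}e^{-iJ[f(p-r)-f(q-r)]}=\frac{1}{3}(1+2\cos 3J)$, which vanishes at $J=2\pi/9$.

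That average over $r$ is supplied by $\mathcal{C}^{(X)}$: site $\tau+1$ must be contracted with the identity state $\mathcal{P}^{(Z)}\mathcal{C}^{(X)}$. This is the paper's key identity \eqref{eqn:prop1}. In the form in which the pattern actually occurs it reads $\mathcal{P}^{(Z)}_{\nu,\tau}\mathcal{C}^{(X)}_{\nu,\tau}\mathbb{A}_{\nu,\tau-1}\mathcal{C}^{(X)}_{\nu,\tau}=\mathcal{P}^{(Z)}_{\nu,\tau}\mathcal{C}^{(X)}_{\nu,\tau}\mathcal{P}^{(Z)}_{\nu,\tau-1}$. The kick does not enter it, and the tools you propose (kick-conjugation of $Z$, $\mathbb{A}$ preserving $Z$-strings) do not produce it. Even the Appendix~\ref{sec:tevec} argument you cite needs commutation with $X_{\nu,\mu}$, not only with $Z_{\nu,\mu}$, to pass from site $\mu$ to $\mu-1$.

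The same omission breaks your merging step. Each new $\mathbb{B}_{\nu,j}$ begins with $\mathcal{P}^{(Z)}_{\nu,t}\mathbb{A}_{\nu,t-1}\mathcal{C}^{(X)}_{\nu,t}$. That $\mathcal{C}^{(X)}_{\nu,t}$ cannot be commuted left past $\mathbb{A}_{\nu,t-1}$, which contains $Z_{\nu,t}$. Nor can the extra $\mathbb{A}_{\nu,t-1}$ be pushed out to $\bra{\Psi}$ or $\ket{\Psi}$: on both sides it meets $\mathcal{C}^{(X)}$ projectors, and for small $n$ kicks, on sites $t-1,t$ that do not commute with it. In the paper, the old $\mathcal{C}^{(X)}_{\nu,t}$ is instead commuted rightwards through the first line until it meets the new $\mathcal{P}^{(Z)}_{\nu,t}\mathbb{A}_{\nu,t-1}\mathcal{C}^{(X)}_{\nu,t}$, forming the pattern above. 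Then \eqref{eqn:prop1} absorbs the extra gate into a new $\mathcal{P}^{(Z)}_{\nu,t-1}$, and repeated use of \eqref{eqn:prop1} and \eqref{eqn:prop2} carries it down to the frontier. Without this identity your induction has no way to dispose of the Ising gates contributed by each $\mathbb{B}_{\nu,j}$; you can see this already on $\mathbb{B}_{\nu,1}\mathbb{B}_{\nu,2}$.

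Two smaller points:
\begin{enumerate}
\item $\mathcal{P}^{(Z)}_{\nu,\tau}$ does not commute with $\mathbb{X}_{\nu,\tau}$ on the same site. Otherwise \eqref{eqn:prop2} would identify the projector $\mathcal{P}^{(Z)}_{\nu,\tau}\mathcal{C}^{(X)}_{\nu,\tau}$, which has rank one on that site, with $\mathbb{X}_{\nu,\tau}\mathcal{P}^{(Z)}_{\nu,\tau}$, which has rank three. Use that commutation only between distinct sites.
\item The lemma is an operator identity at fixed $\nu$ (it keeps $\mathbb{A}_{\nu,t-1}$ in front). Invoking $\bra{\Psi}$-invariance would therefore prove only a weaker, bracketed statement; the paper's proof uses only local operator identities.
\end{enumerate}
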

Using this we finally obtain 
\begin{align}
\langle\Psi|\left(\prod_{j=1}^N \mathbb{T}\left[h_j\right]\right)|\Psi\rangle = \braket{ \Psi|\prod_{\nu=1}^{n}\left[ \prod_{j=1}^{\lfloor N/2 \rfloor-1} [\mathcal{P}_{\nu,t-j}^{(Z)} \mathcal{C}_{\nu,t-j}^{(X)}  \right] [ \mathcal{P}_{\nu,t- \lfloor N/2 \rfloor}^{(Z)}]^{\mathrm{mod}(N,2)}  |\Psi}.
\end{align}

\section{Proof of Lemma 1}
We prove the lemma using induction. 

\emph{Proof}- The base of the induction is given by $n=1$, which is given by 
\begin{equation}\label{eqn:blemman1}
    \mathbb{B}_{\nu,1}\mathbb{B}_{\nu,2}= \mathcal{P}^{(Z)}_{\nu,t}\mathbb{A}_{\nu,t-1}  \mathcal{C}_{\nu,t}^{(X)} \mathcal{P}^{(Z)}_{\nu,t}\mathbb{A}_{\nu,t-1}  \mathcal{C}_{\nu,t}^{(X)} \mathbb{X}_{\nu,t-1}= \mathbb{A}_{\nu,t-1}\mathcal{P}_{\nu,t}^{(Z)}\mathcal{C}^{(X)}_{\nu,t}\mathcal{P}_{\nu,t-1}^{(Z)}\mathbb{X}_{\nu,t-1}.
\end{equation}
It can be proven by expanding the $\mathbb{A}_{\nu,\tau}$ and the using the following properties
\begin{align}
  \left(\frac{\mathbb{1} + \omega X_{\nu,\tau} \otimes X_{\nu,\tau} + \omega^{2} X_{\nu,\tau}^{\dagger} \otimes X_{\nu,\tau}^{\dagger}}{3}\right)\mathcal{C}_{\nu,t}^{(X)} = 0, \\
 \left(\frac{\mathbb{1} + \omega^{2} X_{\nu,\tau} \otimes X_{\nu,\tau} + \omega X_{\nu,\tau}^{\dagger} \otimes X_{\nu,\tau}^{\dagger}}{3}\right) \mathcal{C}_{\nu,t}^{(X)} =0,\\
 \mathcal{P}^{(Z)}_{\nu,t} (Z^{\nu,\tau} \otimes Z^{\nu,\tau} -\mathbb{1})= 0.
\end{align}
which can be used to obtain
\begin{equation}\label{eqn:prop1}
\mathcal{P}_{\nu,\tau}^{(Z)}\mathcal{C}^{(X)}_{\nu,\tau}\mathbb{A}_{\nu,\tau-1}\mathcal{P}^{(Z)}_{\nu,\tau}= \mathcal{P}^{(Z)}_{\nu,\tau}\mathcal{C}^{(X)}_{\nu,\tau}\mathcal{P}^{(Z)}_{\nu,\tau-1}.
\end{equation}
Eq. \eqref{eqn:blemman1} agrees with Eq. \eqref{eqn:blemmaa} for $n=1$.

Next we have 
\begin{align}
    \mathbb{B}_{\nu,1}\mathbb{B}_{\nu,2}\mathbb{B}_{\nu,3}&= \mathcal{P}^{(Z)}_{\nu,t}\mathbb{A}_{\nu,t-1}  \mathcal{C}_{\nu,t}^{(X)} \mathcal{P}^{(Z)}_{\nu,t}\mathbb{A}_{\nu,t-1}  \mathcal{C}_{\nu,t}^{(X)} \mathbb{X}_{\nu,t-1} \mathcal{P}^{(Z)}_{\nu,t}\mathbb{A}_{\nu,t-1}\mathcal{C}^{(X)}_{\nu,t}\mathbb{A}_{\nu,t-2} \mathbb{Z}^{h_{3},\alpha_{3},\beta_{3}}_{\nu,\tau-1}\mathbb{X}_{\nu,t-2}\mathbb{X}_{\nu,t-1},\\
    &= \mathbb{A}_{\nu,t-1}\mathcal{P}^{(Z)}_{\nu,t} \mathcal{C}^{(X)}_{\nu,t}\mathcal{P}^{(Z)}_{\nu,t-1} \mathcal{C}^{(X)}_{\nu,t-1}\mathbb{A}_{\nu,t-2}\mathbb{Z}^{h_{3},\alpha_{3},\beta_{3}}_{\nu,\tau-1}\mathbb{X}_{\nu,t-2}\mathbb{X}_{\nu,t-1},\\
    &= \mathbb{A}_{\nu,t-1}\mathcal{P}^{(Z)}_{\nu,t} \mathcal{C}^{(X)}_{\nu,t}\mathcal{P}^{(Z)}_{\nu,t-1} \mathcal{C}^{(X)}_{\nu,t-1}\mathbb{A}_{\nu,t-2}\mathbb{X}_{\nu,t-2}\mathbb{X}_{\nu,t-1},
\end{align}
which agrees with Eq. \eqref{eqn:blemmab} with $n=1$. In the last step we use the fact that 
\begin{equation}
\mathcal{P}^{(Z)}_{\nu,\tau}\mathbb{Z}^{h_{3},\alpha_{3},\beta_{3}}_{\nu,\tau-1}= \mathcal{P}^{(Z)}_{\nu,\tau}.
\end{equation} 
To obtain it we use find the expression for $e^{-i h( Z+ Z^{\dagger})}$, analogous to Eq. \eqref{eqn:expx3} with $\tilde{C}_{0}= \frac{1}{3} \left(2 e^{i h}+e^{-2 i h}\right)$ and $\tilde{C}_{1}= \frac{1}{3} e^{-2 i h} \left(1-e^{3 i h}\right)$. We then evaluate the obtained expression and use the following properties of the coefficients
\begin{equation}
    |\tilde{C}_{0}^{2}|= \frac{1}{9}(5+4\cos(3h)), \quad  |\tilde{C}_{1}^{2}|= \frac{4}{9}\sin \left(\frac{3h}{2}\right)^{2},\quad |\tilde{C}_{0}^{2}|+ 2|\tilde{C}_{1}^{2}|=1,\quad \tilde{C}^{*}_{0}\tilde{C}_{1}+ \tilde{C}_{0}\tilde{C}^{*}_{1}+ |\tilde{C}_{0}|^{2}=0.
\end{equation}

Hence the lemma \ref{eqn:blemma} holds for $n=1$. Next, we prove the base of the induction assuming Eq. \eqref{eqn:blemmaa} and show that Eq. \eqref{eqn:blemmab} holds. 

We now evaluate 
    \begin{align}
    \mathbb{B}_{\nu,1} \cdots \mathbb{B}_{\nu,2n}\mathbb{B}_{\nu,2n+1} &= \mathbb{A}_{\nu,t-1} \prod_{j=0}^{n-1}[\mathcal{P}^{(Z)}_{\nu,t-j}\mathcal{C}^{(X)}_{\nu,t-j}] \mathcal{P}^{(Z)}_{\nu,t-n} \mathbb{X}_{\nu,t-n} \prod^{2n-2}_{j=n} \left[ \prod^{t-2n+j}_{\tau=t-1-j} \mathbb{A}_{\nu,\tau} \prod_{\tau=t-j}^{t-2n+j} \mathbb{\tilde{Z}}^{h_{j},\alpha,\beta}_{\nu,\tau} \prod_{\tau=t-2n+j+1}^{t-1-j} \mathbb{X}_{\nu,\tau}  \right] \nonumber \\
 &\times   \mathcal{P}^{(Z)}_{\nu,t}\mathbb{A}_{\nu,t-1}  \mathcal{C}_{\nu,t}^{(X)}\prod_{\tau=t-2n+1}^{t-2} \mathbb{A}_{\nu,\tau} \prod_{\tau=t-2n+2}^{t-1} \mathbb{\tilde{Z}}^{h_{j},\alpha,\beta}_{\nu,\tau} \prod_{\tau=t-2n+1}^{t-1} \mathbb{X}_{\nu,\tau}.
    \end{align}
We now note that $\mathcal{C}^{(X)}_{\nu,t}$ commutes with all the terms in the first line, to bring it closer to $\mathcal{P}_{\nu,t}^{(Z)}$ and then we use Eq. \eqref{eqn:prop1} to obtain
 \begin{align}
    \mathbb{B}_{\nu,1} \cdots \mathbb{B}_{\nu,2n}\mathbb{B}_{\nu,2n+1} &= \mathbb{A}_{\nu,t-1} \prod_{j=0}^{n-1}[\mathcal{P}^{(Z)}_{\nu,t-j}\mathcal{C}^{(X)}_{\nu,t-j}] \mathcal{P}^{(Z)}_{\nu,t-n} \mathbb{X}_{\nu,t-n} \prod^{2n-2}_{j=n} \left[ \prod^{t-2n+j}_{\tau=t-1-j} \mathbb{A}_{\nu,\tau} \prod_{\tau=t-j}^{t-2n+j} \mathbb{\tilde{Z}}^{h_{j},\alpha,\beta}_{\nu,\tau} \prod_{\tau=t-2n+j+1}^{t-1-j} \mathbb{X}_{\nu,\tau}  \right] \nonumber \\
 &\times   \mathcal{P}^{(Z)}_{\nu,t-1}\prod_{\tau=t-2n+1}^{t-2} \mathbb{A}_{\nu,\tau} \prod_{\tau=t-2n+2}^{t-1} \mathbb{\tilde{Z}}^{h_{j},\alpha,\beta}_{\nu,\tau} \prod_{\tau=t-2n+1}^{t-1} \mathbb{X}_{\nu,\tau}, \nonumber \\
 &= \mathbb{A}_{\nu,t-1} \prod_{j=0}^{n-1}[\mathcal{P}^{(Z)}_{\nu,t-j}\mathcal{C}^{(X)}_{\nu,t-j}] \mathcal{P}^{(Z)}_{\nu,t-n} \mathbb{X}_{\nu,t-n}\mathcal{P}^{(Z)}_{\nu,t-n} \prod^{2n-2}_{j=n} \left[ \prod^{t-2n+j}_{\tau=t-1-j} \mathbb{A}_{\nu,\tau} \prod_{\tau=t-j}^{t-2n+j} \mathbb{\tilde{Z}}^{h_{j},\alpha,\beta}_{\nu,\tau} \prod_{\tau=t-2n+j+1}^{t-1-j} \mathbb{X}_{\nu,\tau}  \right] \nonumber \\
 &\times \prod_{\tau=t-2n+1}^{t-2} \mathbb{A}_{\nu,\tau} \prod_{\tau=t-2n+2}^{t-1} \mathbb{\tilde{Z}}^{h_{j},\alpha,\beta}_{\nu,\tau} \prod_{\tau=t-2n+1}^{t-1} \mathbb{X}_{\nu,\tau},
    \end{align}

where we repeatedly used Eq. \eqref{eqn:prop1} and \eqref{eqn:prop2} to obtain the last line. 

Finally we obtain,

 \begin{align}
    \mathbb{B}_{\nu,1} \cdots \mathbb{B}_{\nu,2n}\mathbb{B}_{\nu,2n+1} 
 &= \mathbb{A}_{\nu,t-1} \prod_{j=0}^{n-1}[\mathcal{P}^{(Z)}_{\nu,t-j}\mathcal{C}^{(X)}_{\nu,t-j}] \mathcal{P}^{(Z)}_{\nu,t-n} \mathcal{P}^{(Z)}_{\nu,t-n} \prod^{2n-2}_{j=n} \left[ \prod^{t-2n+j}_{\tau=t-1-j} \mathbb{A}_{\nu,\tau} \prod_{\tau=t-j}^{t-2n+j} \mathbb{\tilde{Z}}^{h_{j},\alpha,\beta}_{\nu,\tau} \prod_{\tau=t-2n+j+1}^{t-1-j} \mathbb{X}_{\nu,\tau}  \right] \nonumber \\
 &\times \prod_{\tau=t-2n+1}^{t-2} \mathbb{A}_{\nu,\tau} \prod_{\tau=t-2n+2}^{t-1} \mathbb{\tilde{Z}}^{h_{j},\alpha,\beta}_{\nu,\tau} \prod_{\tau=t-2n+1}^{t-1} \mathbb{X}_{\nu,\tau}, \nonumber \\
 &= \mathbb{A}_{\nu,t-1} \prod_{j=0}^{n}[\mathcal{P}^{(Z)}_{\nu,t-j}\mathcal{C}^{(X)}_{\nu,t-j}] \mathcal{P}^{(Z)}_{\nu,t-n} \mathbb{C}^{(X)}_{\nu,t-n} \prod^{2n-1}_{j=n} \left[ \prod^{t-2n+j}_{\tau=t-1-j} \mathbb{A}_{\nu,\tau} \prod_{\tau=t-j}^{t-2n+j} \mathbb{\tilde{Z}}^{h_{j},\alpha,\beta}_{\nu,\tau} \prod_{\tau=t-2n+j+1}^{t-1-j} \mathbb{X}_{\nu,\tau}  \right], 
 \end{align}
which is exactly Eq. \eqref{eqn:blemmab}.

Similarly, we can start from Eq. \eqref{eqn:blemmab} and show that Eq. \eqref{eqn:blemmaa} holds for $n+1$. This proves the induction and hence concludes the proof. 
\end{document}